\documentclass[journal]{IEEEtran}

\usepackage{cite}
\usepackage[OT1]{fontenc}
\usepackage[cmex10]{amsmath}
\usepackage{amssymb,amsfonts,amsthm}
\usepackage{bm}
\usepackage{braket}
\usepackage{graphicx}
\usepackage{color}
\usepackage{tabularx}
\usepackage{arydshln}
\usepackage{mathtools}
\usepackage{multirow}
\usepackage{algorithm,algpseudocode}
\usepackage{url}
\usepackage{listings}
\usepackage{comment}
\usepackage{CJKutf8}
\usepackage{tikz}
\usepackage{quantikz}
\usepackage{physics}
\usepackage{textcomp}
\usepackage[caption=false,font=footnotesize]{subfig} 

\def\BibTeX{{\rm B\kern-.05em{\sc i\kern-.025em b}\kern-.08em
    T\kern-.1667em\lower.7ex\hbox{E}\kern-.125emX}}

\newtheorem{lemma}{Lemma}
\newtheorem{theorem}{Theorem}
\newtheorem{definition}{Definition}

\newtheorem{proposition}{Proposition}

\title{Efficient Quantum Circuit Construction of Controlled Time-Evolution for Arbitrary Pauli-Sum Hamiltonians}

\author{Shintaro~Fujiwara,~\IEEEmembership{Graduate Student Member,~IEEE},
Naoki~Yamamoto, and
Naoki~Ishikawa,~\IEEEmembership{Senior Member,~IEEE}%
\thanks{S. Fujiwara and N. Ishikawa are with the Faculty of Engineering, Yokohama National University, 240-8501 Kanagawa, Japan (e-mail: fujiwara-shintaro-by@ynu.jp; ishikawa-naoki-fr@ynu.ac.jp).}%
\thanks{N. Yamamoto is with the Department of Applied Physics and Physico-Informatics, Keio University, 223-0061 Kanagawa, Japan (e-mail: yamamoto@appi.keio.ac.jp).}%
\thanks{This work was supported by JSPS KAKENHI Grant Numbers JP26KJ1221 and JP26K00944.
This work is also supported by the FY2025 MITOU Target Program, organized by the Information-technology Promotion Agency (IPA).}%
\thanks{Corresponding author: Shintaro Fujiwara (email: fujiwara-shintaro-by@ynu.jp).}}

\begin{document}

\maketitle

\begin{abstract}
Controlled time-evolution circuits select forward or backward Hamiltonian time evolution according to the state of an ancilla qubit. They are fundamental building blocks in quantum eigenvalue transformation of unitaries, Hamiltonian filtering, and related quantum algorithms. A direct realization adds ancilla control to the elementary gates of the time-evolution circuit and therefore increases the two-qubit gate count, compiled \(T\) depth and CX depth. We develop an efficient recursive algorithm that, for an arbitrary Pauli-sum Hamiltonian, partitions the input Pauli terms into groups and assigns to each group a sign-flip Pauli string that anti-commutes with the in-group terms, thereby removing ancilla control from the grouped time-evolution blocks. Numerical benchmarks on random Hamiltonians and structured spin Hamiltonians show reductions in compiled \(T\) depth and compiled CX depth. 
For a Kagome Hamiltonian with 24 spins under full connectivity, the proposed construction reduces the compiled \(T\) depth by \(85.2\%\) and the compiled CX depth by \(68.9\%\), compared with a conventional implementation that decomposes the Hamiltonian into individual Pauli terms and implements the controlled time evolution of each term by directly adding the ancilla qubit to the corresponding Pauli-rotation circuit.
\end{abstract}

\begin{IEEEkeywords}
Binary symplectic representation, Controlled time evolution, Pauli-sum Hamiltonian, Quantum circuit synthesis, Quantum eigenvalue transformation of unitary matrices.
\end{IEEEkeywords}

\section{Introduction}
\label{sec:intro}

\IEEEPARstart{S}{everal} quantum algorithms use a one-ancilla circuit that selects between forward and backward real-time Hamiltonian evolution according to the ancilla state. Such a circuit appears in quantum eigenvalue transformation of unitaries (QETU), Hamiltonian filtering, and related state-preparation methods \cite{dong2022GroundState,karacan2025enhancing,dong2024multilevel,muller2024groundstate,karacan2025filter,kane2024nearly}. Since these algorithms call the controlled-evolution primitive repeatedly, reducing the depth of this primitive reduces the cost of the full algorithm.

In this paper we consider the signed controlled time-evolution circuit
\begin{equation}
D_H(t)
:=
\ketbra{0}{0}_{\mathcal A}\otimes e^{\mathrm{i} t H}
+
\ketbra{1}{1}_{\mathcal A}\otimes e^{-\mathrm{i} t H}.
\label{eq:intro_signed_controlled_evolution}
\end{equation}
Here, \(\mathcal A\) is the ancilla register and \(H\) acts on the system register. A direct realization controls the two time-evolution branches by the ancilla, as shown in Fig.~\ref{fig:intro}(a).

\begin{figure}[t]
  \centering
  \subfloat[Direct controlled realization.]{%
    \normalsize
    \begin{quantikz}[row sep={0.9cm,between origins}, column sep=0.9cm]
      \lstick{$\mathcal A$} & \octrl{1} & \ctrl{1} & \qw \\
      \lstick{$\mathcal S$} & \gate{e^{\mathrm{i}tH}} & \gate{e^{-\mathrm{i}tH}} & \qw
    \end{quantikz}
  }\\[2pt]
  \subfloat[Control-free realization based on Dong \textit{et al.}~\cite{dong2022GroundState}.]{%
    \normalsize
    \begin{quantikz}[row sep={0.9cm,between origins}, column sep=0.9cm]
      \lstick{$\mathcal A$} & \octrl{1} & \qw & \octrl{1} & \qw \\
      \lstick{$\mathcal S$} & \gate{K} & \gate{e^{-\mathrm{i}tH}} & \gate{K} & \qw
    \end{quantikz}
  }
  \caption{Two realizations of the signed controlled time evolution in Eq.~\eqref{eq:intro_signed_controlled_evolution}. The registers $\mathcal A$ and $\mathcal S$ are the ancilla and system registers, respectively. Open and filled control circles denote controls conditioned on $\ket{0}_{\mathcal A}$ and $\ket{1}_{\mathcal A}$, respectively. In (a), the ancilla directly controls the two time-evolution branches. In (b), the two occurrences of the Pauli string $K$ implement the conjugation $K e^{-\mathrm{i}tH}K=e^{\mathrm{i}tH}$. Consequently, the ancilla controls only $K$, while the time-evolution block itself remains uncontrolled.}
  \label{fig:intro}
\end{figure}

The need for such a primitive is not limited to QETU and Hamiltonian filtering. For example, probabilistic imaginary-time-evolution methods use one-ancilla forward and backward real-time-evolution subroutines~\cite{Kosugi2022Imaginary,nishi2023Optimal,xie2024probabilistic}. Out-of-time-ordered correlator (OTOC) measurement protocols also use time-reversal evolution subroutines around operator insertions~\cite{zhu2016measurement,gonzalezalonso2019outoftime,dressel2018Strengthening}. These examples motivate studying the circuit-level cost of controlled or time-reversal evolution primitives.

This direct realization of the circuit can be expensive after circuit synthesis and transpilation. Product-formula implementations of Hamiltonian time evolution consist of many elementary Pauli rotations \cite{Trotter1959On,suzuki1990Fractal}. Adding the same ancilla control to these rotations increases the number of two-qubit gates. It also restricts parallelism of quantum circuits. Specifically, rotations acting on disjoint system qubits may still share the ancilla after control is added. Consequently, the compiled \(T\) depth and CX depth can increase.

Dong \textit{et al.} introduced a way to avoid controlling the time-evolution block itself \cite{dong2022GroundState}. When a Pauli string $K$ reverses the sign of a Hamiltonian $H$ under conjugation, i.e., $KHK = -H$, an uncontrolled time-evolution block can be sandwiched by that Pauli string to select the sign of the evolution because $Ke^{-itH}K = e^{itH}$ holds. We refer to this Pauli string as a \emph{sign-flip Pauli string}. The ancilla then controls only the Pauli string, not the time-evolution circuit, as shown in Fig.~\ref{fig:intro}(b). 

Applying this idea to a general Pauli-sum Hamiltonian requires an additional algorithmic step. For a general Pauli-sum Hamiltonian, a single Pauli string may not be able to reverse the sign of all Pauli terms simultaneously. Therefore, the Hamiltonian must first be divided into term-groups, and one sign-flip Pauli string must be assigned to each group. Existing control-free constructions choose these groups and sign-flip Pauli strings from the special structure of the target model, such as spin or lattice Hamiltonians \cite{dong2022GroundState,karacan2025enhancing,muller2024groundstate,kane2024nearly}. In these constructions, the sign-flip strings are specified from model-dependent interaction patterns or symmetries. However, these works do not provide a general procedure that takes an arbitrary Pauli list as input and outputs both the term partition and the sign-flip string for each part.

We address this algorithmic problem by developing a recursive algorithm for arbitrary Pauli-sum Hamiltonians. The input is a list of non-identity Pauli terms. The output is a partition of that list into term-groups and one sign-flip Pauli string for each group. Each grouped time-evolution block can then be implemented without ancilla control. The ancilla controls only the corresponding sign-flip Pauli strings. The algorithm uses the binary symplectic representation of Pauli strings internally. 

The same signed primitive also yields a control-free construction of single-branch controlled Hamiltonian evolution. This connects the proposed construction to Hamiltonian simulation by quantum signal processing (QSP), phase-estimation-type procedures, generalized quantum signal processing (GQSP), and the Hadamard test \cite{low2017Optimal,kitaev1995Quantum,motlagh2024Generalized,berry2024Doubling,cleve1998Quantum}.

Pauli-string and phase-rotation circuit optimization has also been studied in compiler-level frameworks \cite{cowtan2020Phase,li2022Paulihedral,jin2024Tetris}. These methods optimize circuits built from Pauli-string rotations or phase-gadget structures themselves. In contrast, our work targets a controlled time-evolution primitive that calls a product-formula sequence of elementary Pauli evolutions conditioned on an ancilla qubit. The objective is therefore to remove the ancilla control from the Hamiltonian-evolution blocks, rather than to optimize a given uncontrolled Pauli-rotation circuit.
Other methods transform a Pauli-sum operator into a different representation involving pairwise anti-commuting Pauli strings, followed by a correction step \cite{schillo2026Block}. In contrast, our construction keeps the original Pauli terms and computes groups for which each group admits a sign-flip Pauli string. This is the structure needed to build controlled time-evolution circuits with uncontrolled evolution blocks.
Finally, related Pauli-grouping methods have different objectives. Commuting-group methods are used for measurement and circuit optimization \cite{jena2019Pauli,verteletskyi2020Measurement,yen2020Measuring}.

The contributions of this paper are summarized as follows.
\begin{itemize}
    \item We develop a recursive algorithm that removes ancilla control from time-evolution blocks for arbitrary Pauli-sum Hamiltonians. Given an input Pauli list, the algorithm partitions the terms into groups and computes one sign-flip Pauli string for each group. Using these data, each grouped time-evolution block is implemented without ancilla control, and the ancilla controls only the sign-flip Pauli strings, which can be implemented at low cost.
    \item We prove that the generated sign-flip Pauli string anti-commutes with every Pauli term in its group, and we derive structural properties and size bounds for the resulting term-groups.
    \item We evaluate the resulting circuits after Clifford+\(T\) synthesis and transpilation, and show reductions in compiled \(T\) depth and CX depth relative to direct and conventional term-by-term constructions.
    \item We show that the same signed primitive also yields a control-free construction of single-branch controlled Hamiltonian evolution.
\end{itemize}

The remainder of this paper is organized as follows. Section~\ref{sec:problem} introduces the notation and the binary symplectic representation of Pauli strings. Section~\ref{sec:construction} presents the recursive construction. Section~\ref{sec:grouping} proves its correctness and analyzes the resulting grouping from the viewpoint of circuit construction. Section~\ref{sec:numerical} reports numerical experiments. Section~\ref{sec:single_branch} discusses an application to single-branch controlled Hamiltonian evolution. Section~\ref{sec:conclusion} concludes the paper. 

\section{Problem formulation and binary symplectic notation}
\label{sec:problem}

As discussed in Section~\ref{sec:intro}, our task is to implement the signed controlled time-evolution subroutine in Eq.~\eqref{eq:intro_signed_controlled_evolution} for a general Pauli-sum Hamiltonian
\begin{equation}
H=\sum_{\ell=0}^{L-1} h_{\ell} P_{\ell},
\qquad
h_{\ell}\in\mathbb{R},
\qquad
P_{\ell}\in\{I,X,Y,Z\}^{\otimes n}.
\label{eq:problem_general_pauli_sum}
\end{equation}
If the Pauli-sum contains an identity component, we first combine all identity terms and write \(H=h_{\mathrm{id}}I^{\otimes n}+H_{\mathrm{ni}}\), where \(H_{\mathrm{ni}}\) contains only non-identity Pauli strings. Then
\begin{equation}
D_H(t)
=
\left(e^{\mathrm{i}t h_{\mathrm{id}}Z_{\mathcal A}}\otimes I\right)
D_{H_{\mathrm{ni}}}(t).
\label{eq:identity_component_ancilla_rotation}
\end{equation}
The factor \(e^{\mathrm{i}t h_{\mathrm{id}}Z_{\mathcal A}}\) is an ancilla-only relative phase between the two branches, not a global phase of \(D_H(t)\). Since it commutes with \(D_{H_{\mathrm{ni}}}(t)\), it may be inserted once before or after the circuit for \(D_{H_{\mathrm{ni}}}(t)\). Hence, in the sign-flip formulation below, \(H\) denotes the non-identity part.

The control-free construction of Dong \textit{et al.}~\cite{dong2022GroundState} is based on the conjugation identity
\begin{equation}
KHK=-H
\quad\Rightarrow\quad
K e^{-\mathrm{i}t H}K=e^{\mathrm{i}t H}.
\label{eq:dong_conjugation_identity}
\end{equation}
If a Pauli string \(K\) flips the sign of every non-identity Pauli term in \(H\), this identity removes the ancilla control from the time-evolution block. For a general Pauli-sum Hamiltonian, however, there may be no single Pauli string with this property. We therefore partition the Pauli terms into term-groups and assign to each term-group a Pauli string that flips the sign of the corresponding partial Hamiltonian. In this section, we first formulate this grouped sign-flip synthesis problem and then restate it in the binary symplectic representation used throughout Sections~\ref{sec:construction} and~\ref{sec:grouping}.

\subsection{Overview of the Grouped Control-free Construction}
\label{subsec:grouped_formulation}

We introduce a positive integer $G$ and a map
\begin{equation}
\gamma:\{0,\dots,L-1\}\to\{0,\dots,G-1\},
\label{eq:gamma}
\end{equation}
which assigns a term-group index to each Pauli term in Eq.~\eqref{eq:problem_general_pauli_sum}. For each $g\in\{0,\dots,G-1\}$, define
\begin{equation}
\mathcal I_g
:=
\{\ell\in\{0,\dots,L-1\}\mid \gamma(\ell)=g\}.
\label{eq:problem_group_index_set}
\end{equation}
Then
\begin{equation}
\bigcup_{g=0}^{G-1}\mathcal I_g=\{0,\dots,L-1\},
\label{eq:problem_group_cover}
\end{equation}
and
\begin{equation}
\mathcal I_g\cap\mathcal I_{g'}=\varnothing
\qquad
(g\neq g').
\label{eq:problem_group_disjoint}
\end{equation}
Using these index sets, we define the partial Hamiltonians
\begin{equation}
H_g
:=
\sum_{\ell\in\mathcal I_g} h_\ell P_\ell
\qquad
0\le g\le G-1,
\label{eq:problem_grouped_hamiltonian}
\end{equation}
which gives
\begin{equation}
H=\sum_{g=0}^{G-1} H_g.
\label{eq:problem_grouped_sum}
\end{equation}

For each term-group $g$, we seek a Pauli string $K_g$ such that
\begin{equation}
\{K_g,P_\ell\}=0
\qquad
\text{for every }\ell\in\mathcal I_g.
\label{eq:problem_groupwise_anticommutation}
\end{equation}
We call such a Pauli string $K_g$ a sign-flip Pauli string for the term-group $g$. Because $h_\ell\in\mathbb R$ and each $P_\ell$ is Hermitian, $H_g$ is Hermitian. Moreover, Eq.~\eqref{eq:problem_groupwise_anticommutation} gives
\begin{equation}
\begin{split}
K_g H_g K_g
&=
\sum_{\ell\in\mathcal I_g} h_\ell K_g P_\ell K_g\\
&=
-\sum_{\ell\in\mathcal I_g} h_\ell P_\ell
=
-H_g.
\end{split}
\label{eq:problem_groupwise_sign_flip}
\end{equation}
Consequently,
\begin{equation}
K_g e^{-\mathrm{i}t H_g} K_g=e^{\mathrm{i}t H_g}
\label{eq:problem_groupwise_evolution_flip}
\end{equation}
for every real $t$.

Define the ancilla-controlled Pauli operation, where the superscript $(0)$ indicates control on the ancilla state $\ket{0}_{\mathcal A}$,
\begin{equation}
C^{(0)}(K_g)
:=
\ketbra{0}{0}_{\mathcal A}\otimes K_g
+
\ketbra{1}{1}_{\mathcal A}\otimes I.
\label{eq:problem_control_zero_pauli}
\end{equation}
Then Eq.~\eqref{eq:problem_groupwise_evolution_flip} gives
\begin{equation}
\begin{split}
&C^{(0)}(K_g)
\left(I_{\mathcal A}\otimes e^{-\mathrm{i}t H_g}\right)
C^{(0)}(K_g)
\\
=&\ketbra{0}{0}_{\mathcal A}\otimes e^{\mathrm{i}t H_g}
+\ketbra{1}{1}_{\mathcal A}\otimes e^{-\mathrm{i}t H_g}.
\end{split}
\label{eq:problem_groupwise_controlfree_realization}
\end{equation}
Hence, the ancilla control is applied only to the Pauli string
\(K_g\), while the time-evolution block itself remains uncontrolled.
The identities above are exact operator identities for the target
evolution blocks. In circuit implementations, these blocks are replaced
by product-formula approximations built from elementary Pauli
evolutions. Accordingly, the grouped sign-flip synthesis problem studied
in this paper is the following. Given the Pauli terms in
Eq.~\eqref{eq:problem_general_pauli_sum}, compute a map \(\gamma\)
and Pauli strings \(K_0,\dots,K_{G-1}\) such that
Eq.~\eqref{eq:problem_groupwise_evolution_flip} holds for every \(g\).
The product-formula realization of the resulting grouped evolution is
discussed in Section~\ref{subsec:grouped_control_free_product_formula}. 

Fig.~\ref{fig:proposed_method_flow} summarizes the overall flow of the proposed grouped control-free construction. Starting from a Pauli-sum Hamiltonian, the recursive algorithm is applied to each input Pauli term to obtain a group label. Pauli terms with the same group label are collected into the same term-group, and the group label determines the sign-flip Pauli string assigned to that group. The grouped data are finally converted into a signed controlled time-evolution circuit in which ancilla control appears only on the sign-flip Pauli strings. The details of the algorithm are explained in Section~\ref{sec:construction}.

\begin{figure}[t]
  \centering
  \subfloat[Target Hamiltonian.]{%
    \includegraphics[scale=0.75]{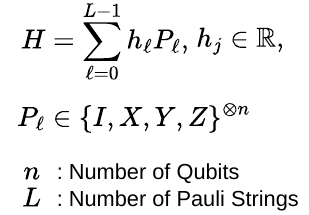}%
  }\\
  \subfloat[Group-label computation.]{%
    \includegraphics[scale=0.75]{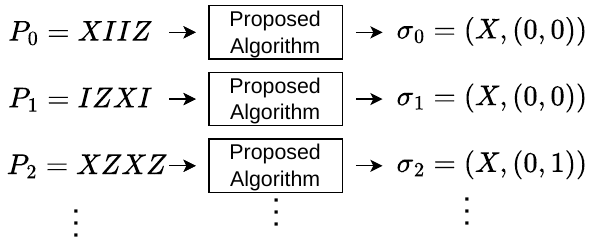}%
  }\\
  \subfloat[Term grouping.]{%
    \includegraphics[scale=0.75]{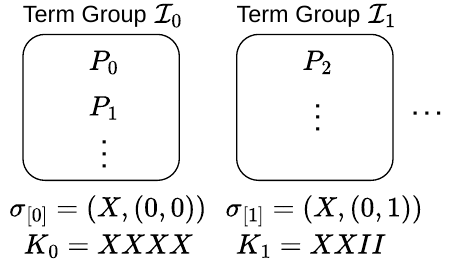}%
  }\\
  \subfloat[Signed controlled time evolution.]{%
    \includegraphics[scale=0.75]{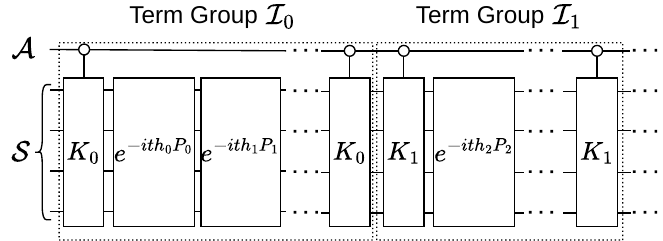}%
  }
  \caption{Overall flow of the proposed grouped control-free construction. The input Hamiltonian is first processed term by term, and the resulting data are then assembled into the final signed controlled time-evolution circuit. (a) The starting point is a Pauli-sum Hamiltonian written as a linear combination of Pauli strings. (b) The proposed recursive algorithm is applied to each Pauli term to compute its group label. (c) Pauli terms with the same group label are collected into the same term-group, and that label determines the sign-flip Pauli string assigned to the group. (d) The grouped data are finally converted into a signed controlled time-evolution circuit in which the ancilla controls only the sign-flip Pauli strings associated with the term-groups.}
  \label{fig:proposed_method_flow}
\end{figure}

In the next subsection, we restate this grouped sign-flip synthesis problem in binary symplectic form, which is the form used by the recursive construction.

\subsection{Binary Symplectic Notation}
\label{subsec:binary_symplectic_notation}

We use the standard binary symplectic representation of Pauli strings~\cite{gottesman1997Stabilizer,dehaene2003Clifford}. Let
\begin{equation}
V:=\mathbb F_2^2,
\label{eq:problem_V_definition}
\end{equation}
which is the two-dimensional vector space over the binary field $\mathbb F_2$.
For
\begin{equation}
u=(u_X,u_Z) \in V,
\qquad
v=(v_X,v_Z)\in V,
\label{eq:problem_V_elements}
\end{equation}
we define addition on $V$ entry-wise by
\begin{equation}
u+v:=(u_X+v_X,\,u_Z+v_Z)\pmod 2.
\label{eq:problem_V_addition}
\end{equation}
We denote the zero vector of $V$ by
\begin{equation}
\bm{0}:=(0,0).
\label{eq:problem_V_zero}
\end{equation}

For Pauli operators acting on a single qubit, define 
\begin{equation}
P(u) := \mathrm{i}^{u_X u_Z} X^{u_X} Z^{u_Z},
\qquad
u=(u_X,u_Z)\in V.
\label{eq:problem_one_qubit_pauli_encoding}
\end{equation}
Then
\begin{equation}
\begin{split}
P(0,0)=I,
\qquad
P(1,0)=X,\\
P(0,1)=Z,
\qquad
P(1,1)=Y.
\end{split}
\label{eq:problem_one_qubit_pauli_table}
\end{equation}

For an $n$-qubit Pauli string, write
\begin{equation}
\bm{p}=(u_0,\dots,u_{n-1})\in V^n,
\label{eq:problem_V_word}
\end{equation}
and define
\begin{equation}
P(\bm{p}):=\bigotimes_{j=0}^{n-1} P(u_j).
\label{eq:problem_word_to_pauli}
\end{equation}
Equivalently, if
\begin{equation}
\bm{x}=(x_0,\dots,x_{n-1}),
\qquad
\bm{z}=(z_0,\dots,z_{n-1})\in\mathbb F_2^n,
\label{eq:problem_xz_vectors}
\end{equation}
then
\begin{equation}
\bm{p}=(\bm{x}\mid\bm{z})=(x_0,\dots,x_{n-1}\mid z_0,\dots,z_{n-1})\in\mathbb F_2^{2n}
\label{eq:problem_binary_symplectic_vector}
\end{equation}
means that $u_j=(x_j,z_j)$ for every $j\in\{0,\dots,n-1\}$. Thus we identify the $V$-valued word space $V^n$ with the binary vector space $\mathbb F_2^{2n}$ through the correspondence
\begin{equation}
(u_0,\dots,u_{n-1})
\longleftrightarrow
(\bm{x}\mid\bm{z}),
\qquad
u_j=(x_j,z_j).
\label{eq:problem_Vn_F2_identification}
\end{equation}

For
\begin{equation}
\bm{p}=(u_0,\dots,u_{n-1}),
\qquad
\bm{q}=(v_0,\dots,v_{n-1})\in V^n,
\label{eq:problem_Vn_elements}
\end{equation}
we define addition on $V^n$ entry-wise by
\begin{equation}
\bm{p}+\bm{q}:=(u_0+v_0,\dots,u_{n-1}+v_{n-1})\in V^n,
\label{eq:problem_Vn_addition}
\end{equation}
where each $u_j+v_j$ is the addition in $V$ defined in Eq.~\eqref{eq:problem_V_addition}. We denote the zero vector of $V^n$ by
\begin{equation}
\bm{0}_n:=(\bm{0},\dots,\bm{0})\in V^n.
\label{eq:problem_Vn_zero}
\end{equation}
The support of $P(\bm{p})$ is
\begin{equation}
\operatorname{supp}(P(\bm{p}))
:=
\{j\in\{0,\dots,n-1\}\mid u_j\neq \bm{0}\}.
\label{eq:problem_support_definition}
\end{equation}

For two one-qubit vectors $u=(u_X,u_Z)$ and $v=(v_X,v_Z)$ in $V$, define the local symplectic form
\begin{equation}
\omega:V\times V\to\mathbb F_2,
\qquad
\omega(u,v) := u_X v_Z + u_Z v_X \pmod 2.
\label{eq:problem_local_symplectic_form}
\end{equation}
For two $n$-qubit words
\begin{equation}
\bm{p}=(u_0,\dots,u_{n-1}),
\qquad
\bm{q}=(v_0,\dots,v_{n-1})\in V^n,
\label{eq:problem_global_words}
\end{equation}
define the global symplectic inner product by
\begin{equation}
\begin{split}
\Omega:V^n\times V^n&\to\mathbb F_2,\\
\qquad
\Omega(\bm{p},\bm{q})
:=
\sum_{j=0}^{n-1} \omega(u_j,v_j)
&=
\bm{x}\cdot\bm{z}' + \bm{z}\cdot\bm{x}'
\pmod 2,
\end{split}
\label{eq:problem_global_symplectic_form}
\end{equation}
where $\bm{p}=(\bm{x}\mid\bm{z})$ and $\bm{q}=(\bm{x}'\mid\bm{z}')$, and
\begin{equation}
\bm{a}\cdot\bm{b}:=\sum_{j=0}^{n-1} a_j b_j \pmod 2
\qquad
(\bm{a},\bm{b}\in\mathbb F_2^n).
\label{eq:problem_F2_dot_product}
\end{equation}
The map $\Omega$ is bilinear over $\mathbb F_2$, and $\Omega(\bm{0}_n,\bm{p})=\Omega(\bm{p},\bm{0}_n)=0$ for every $\bm{p}\in V^n$.

With this notation, the commutation relation between two Pauli strings is written as~\cite{dehaene2003Clifford}
\begin{equation}
P(\bm{p})P(\bm{q})=(-1)^{\Omega(\bm{p},\bm{q})}P(\bm{q})P(\bm{p}).
\label{eq:problem_commutation_symplectic}
\end{equation}
Therefore, $P(\bm{p})$ and $P(\bm{q})$ commute when
\begin{equation}
\Omega(\bm{p},\bm{q})=0
\iff
[P(\bm{p}),P(\bm{q})]=0,
\label{eq:problem_commutation_condition}
\end{equation}
and anti-commute when
\begin{equation}
\Omega(\bm{p},\bm{q})=1
\iff
\{P(\bm{p}),P(\bm{q})\}=0.
\label{eq:problem_anticommutation_condition}
\end{equation}
Accordingly, the grouped anti-commutation problem in Eq.~\eqref{eq:problem_groupwise_anticommutation} becomes a binary-symplectic compatibility problem: for each $g$, find a binary vector $\bm{k}_g\in\mathbb F_2^{2n}$ such that
\begin{equation}
\Omega(\bm{k}_g,\bm{p}_\ell)=1
\qquad
\text{for all }\ell\in\mathcal I_g.
\label{eq:problem_binary_groupwise_condition}
\end{equation}
The recursive construction in Section~\ref{sec:construction} is designed to compute such vectors group-wise from the input Pauli list.

\section{Recursive construction of term-groups and sign-flip Pauli strings}
\label{sec:construction}

We now describe the recursive algorithm. The algorithm mainly consists of two parts, \emph{Reduction} and \emph{Generation}. The input is the list of non-identity Pauli terms \(P_0,\dots,P_{L-1}\) on \(n\) qubits. The output consists of a partition $\{\mathcal I_g\}_{g=0}^{G-1}$ of the term-index set \(\{0,\dots,L-1\}\), together with one Pauli string \(K_g\) for each set \(\mathcal I_g\). 

We describe the algorithm below.

\subsection{Reduction on Padded \(V\)-valued Words}
\label{subsec:construction_reduction}

For each non-identity Pauli term \(P_\ell\), let
\begin{equation}
\bm{p}_\ell=(u_{\ell,0},\dots,u_{\ell,n-1})\in V^n
\label{eq:construction_input_word}
\end{equation}
be its \(V\)-valued word. Let
\begin{equation}
m:=\lceil \log_2 n\rceil,
\qquad
N:=2^m.
\label{eq:construction_power_of_two_length}
\end{equation}
We pad \(\bm{p}_\ell\) with zero symbols to length \(N\) and define
\begin{equation}
\bm{p}_\ell^{(m)}
:=
(u_{\ell,0},\dots,u_{\ell,n-1},\bm{0},\dots,\bm{0})\in V^N.
\label{eq:construction_zero_padded_word}
\end{equation}

\emph{Reduction} maps a nonzero padded word to two pieces of classical data: a control-bit sequence and a root symbol. 

\begin{definition}[Reduction]
\label{def:reduction_general_pauli}
Let \(\bm{p}^{(m)}\in V^{2^m}\) be nonzero. For each \(j=m-1,m-2,\dots,0\), write
\begin{equation}
\bm{p}^{(j+1)}=[\bm{a}^{(j)};\bm{b}^{(j)}],
\qquad
\bm{a}^{(j)},\bm{b}^{(j)}\in V^{2^j},
\label{eq:construction_reduction_split}
\end{equation}
where \([\cdot;\cdot]\) denotes concatenation of the upper and lower halves. For \(\bm{a}=(a_0,\dots,a_{2^j-1})\) and \(\bm{b}=(b_0,\dots,b_{2^j-1})\), this means
\begin{equation}
[\bm{a};\bm{b}]
=
(a_0,\dots,a_{2^j-1},b_0,\dots,b_{2^j-1})
\in V^{2^{j+1}}.
\label{eq:construction_concatenation_definition}
\end{equation}
Define
\begin{equation}
c_j
:=
\begin{cases}
1, & \bm{a}^{(j)}=\bm{b}^{(j)},\\
0, & \bm{a}^{(j)}\neq\bm{b}^{(j)},
\end{cases}
\qquad
0\le j\le m-1.
\label{eq:construction_control_bit_rule}
\end{equation}
We call \(c_j\) the control bit at level \(j\). The next reduced word is defined by
\begin{equation}
\bm{p}^{(j)}
:=
\begin{cases}
\bm{a}^{(j)}, & c_j=1,\\
\bm{a}^{(j)}+\bm{b}^{(j)}, & c_j=0.
\end{cases}
\label{eq:construction_reduction_rule}
\end{equation}
Equal halves are represented by one copy. Unequal halves are replaced by their entry-wise sum over \(V\). This rule preserves the symplectic inner product with the generated word at each reduction level, as proved in Lemma~\ref{lem:grouping_stagewise_invariance}.
The resulting bit sequence is denoted by
\begin{equation}
\bm{c}(\bm{p}^{(m)}) := (c_0,\dots,c_{m-1})\in\mathbb F_2^m.
\label{eq:construction_control_bits_definition}
\end{equation}
The final one-symbol word is called the root symbol and is denoted by
\begin{equation}
r(\bm{p}^{(m)}) := \bm{p}^{(0)}\in V.
\label{eq:construction_root_symbol_definition}
\end{equation}
\end{definition}
The superscript in \(\bm{p}^{(j)}\) denotes the reduction level. The word \(\bm{p}^{(j)}\) has length \(2^j\).

For the non-identity Pauli terms considered here, the next proposition shows that the root symbol is always nonzero.

\begin{proposition}[Nonzero root for non-identity inputs]
\label{prop:construction_nonzero_root}
Let \(\bm{p}^{(m)}\in V^{2^m}\) be nonzero, and let
\begin{equation}
\bm{p}^{(0)},\bm{p}^{(1)},\dots,\bm{p}^{(m)}
\label{eq:construction_nonzero_sequence}
\end{equation}
be the Reduction sequence from Definition~\ref{def:reduction_general_pauli}. Then
\begin{equation}
\bm{p}^{(j)}\neq \bm{0}_{2^j}
\qquad
\text{for every }j\in\{0,\dots,m\}.
\label{eq:construction_nonzero_conclusion}
\end{equation}
In particular,
\begin{equation}
r(\bm{p}^{(m)})=\bm{p}^{(0)}\in V\setminus\{\bm{0}\}.
\label{eq:construction_nonzero_root}
\end{equation}
\end{proposition}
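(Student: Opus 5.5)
We argue by downward induction on the reduction level \(j\), starting from \(j=m\) and descending to \(j=0\). The claim at \(j=m\) is the hypothesis that \(\bm{p}^{(m)}\) is nonzero. For the inductive step, assume \(\bm{p}^{(j+1)}\neq\bm{0}_{2^{j+1}}\) for some \(j\in\{0,\dots,m-1\}\), and write \(\bm{p}^{(j+1)}=[\bm{a}^{(j)};\bm{b}^{(j)}]\) as in Eq.~\eqref{eq:construction_reduction_split}. By the definition of concatenation in Eq.~\eqref{eq:construction_concatenation_definition}, \(\bm{p}^{(j+1)}\) is zero exactly when both halves are zero. Hence at least one of \(\bm{a}^{(j)}\) and \(\bm{b}^{(j)}\) is nonzero.

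We then split according to the control bit \(c_j\) of Eq.~\eqref{eq:construction_control_bit_rule}.

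\emph{Case \(c_j=1\).} Here \(\bm{a}^{(j)}=\bm{b}^{(j)}\), and Eq.~\eqref{eq:construction_reduction_rule} gives \(\bm{p}^{(j)}=\bm{a}^{(j)}\). Since the two halves are equal and at least one is nonzero, \(\bm{a}^{(j)}\neq\bm{0}_{2^j}\).

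\emph{Case \(c_j=0\).} Here \(\bm{a}^{(j)}\neq\bm{b}^{(j)}\), and \(\bm{p}^{(j)}=\bm{a}^{(j)}+\bm{b}^{(j)}\). Every element of \(V^{2^j}\) is its own additive inverse under the entry-wise mod-2 addition of Eq.~\eqref{eq:problem_Vn_addition}. Therefore \(\bm{a}^{(j)}+\bm{b}^{(j)}=\bm{0}_{2^j}\) would force \(\bm{a}^{(j)}=\bm{b}^{(j)}\), which contradicts \(c_j=0\). So \(\bm{p}^{(j)}\) is nonzero in this case as well.

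In both cases \(\bm{p}^{(j)}\neq\bm{0}_{2^j}\), which completes the induction and gives Eq.~\eqref{eq:construction_nonzero_conclusion} for every \(j\). Taking \(j=0\), the word \(\bm{p}^{(0)}\) has length one, so it is a single symbol of \(V\) different from \(\bm{0}\). This is Eq.~\eqref{eq:construction_nonzero_root}.

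The argument has no real obstacle. The only point needing care is the characteristic-two fact that \(\bm{a}+\bm{b}=\bm{0}\) implies \(\bm{a}=\bm{b}\), which is exactly what makes the summing branch safe. We should also note why this proposition applies to the inputs of interest: a non-identity Pauli term yields a nonzero word \(\bm{p}_\ell\), and zero-padding as in Eq.~\eqref{eq:construction_zero_padded_word} keeps it nonzero, so the hypothesis holds for every padded input word.
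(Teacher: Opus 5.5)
Your proof is correct and takes essentially the same approach as the paper. The paper phrases it as a contradiction at the largest level \(t\) with \(\bm{p}^{(t)}=\bm{0}_{2^t}\), which is your downward induction in extremal form, and it handles the two cases on the control bit identically: either equal halves would both have to be zero, or \(\bm{a}+\bm{b}=\bm{0}\) would force \(\bm{a}=\bm{b}\).
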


\begin{proof}
Suppose, for contradiction, that \(\bm{p}^{(t)}=\bm{0}_{2^t}\) for some \(t\), and choose the largest such \(t\). Then \(t<m\), because \(\bm{p}^{(m)}\neq\bm{0}_{2^m}\). Write
\begin{equation}
\bm{p}^{(t+1)}=[\bm{a}^{(t)};\bm{b}^{(t)}].
\label{eq:construction_nonzero_split}
\end{equation}
Since \(\bm{p}^{(t+1)}\neq\bm{0}_{2^{t+1}}\), at least one of \(\bm{a}^{(t)}\) and \(\bm{b}^{(t)}\) is nonzero.

If \(c_t=1\), then \(\bm{a}^{(t)}=\bm{b}^{(t)}\) and \(\bm{p}^{(t)}=\bm{a}^{(t)}\). Hence \(\bm{p}^{(t)}=\bm{0}_{2^t}\) implies \(\bm{a}^{(t)}=\bm{b}^{(t)}=\bm{0}_{2^t}\), which contradicts \(\bm{p}^{(t+1)}\neq\bm{0}_{2^{t+1}}\).

If \(c_t=0\), then \(\bm{p}^{(t)}=\bm{a}^{(t)}+\bm{b}^{(t)}\). Therefore \(\bm{p}^{(t)}=\bm{0}_{2^t}\) implies \(\bm{a}^{(t)}=\bm{b}^{(t)}\), which contradicts the defining condition \(c_t=0\).

Both cases contradict the assumptions on \(t\). Hence no such \(t\) exists.
\end{proof}

Since the root symbol is nonzero, it can be mapped to a one-qubit symbol that anti-commutes with $P(r(\bm{p}^{(m)}))$. This symbol and the control-bit sequence form the group label.

\begin{definition}[Group label]
\label{def:construction_group_label}
Fix a map
\begin{equation}
\kappa:V\setminus\{\bm{0}\}\to V\setminus\{\bm{0}\}
\label{eq:construction_kappa_map}
\end{equation}
satisfying
\begin{equation}
\omega\bigl(\kappa(r),r\bigr)=1
\qquad
\text{for every }r\in V\setminus\{\bm{0}\}.
\label{eq:construction_kappa_anticommutes}
\end{equation}
For a nonzero input word \(\bm{p}^{(m)}\) arising from a non-identity Pauli term, define its seed symbol by
\begin{equation}
s(\bm{p}^{(m)})
:=
\kappa(r(\bm{p}^{(m)}))
\in
\operatorname{Im}(\kappa).
\label{eq:construction_seed_symbol_definition}
\end{equation}
The group label is the pair
\begin{equation}
\sigma(\bm{p}^{(m)})
:=
\bigl(s(\bm{p}^{(m)}),\bm{c}(\bm{p}^{(m)})\bigr)
\in
\operatorname{Im}(\kappa)\times\mathbb F_2^m.
\label{eq:construction_group_label_definition}
\end{equation}
\end{definition}

For the concrete construction used below, we adopt the choice
\begin{equation}
\kappa(1,0)=(1,1),
\qquad
\kappa(0,1)=(1,1),
\qquad
\kappa(1,1)=(1,0),
\label{eq:construction_kappa_choice}
\end{equation}
which corresponds to
\begin{equation}
\kappa(X)=Y,
\qquad
\kappa(Z)=Y,
\qquad
\kappa(Y)=X.
\label{eq:construction_kappa_choice_pauli}
\end{equation}
The image of an admissible map \(\kappa\) cannot consist of a single symbol.
Indeed, suppose that \(|\operatorname{Im}(\kappa)|=1\). Then there exists
\(s\in\{X,Z,Y\}\) such that
\begin{equation}
\kappa(X)=\kappa(Z)=\kappa(Y)=s.
\end{equation}

The choice of \(\kappa\) in
Eq.~\eqref{eq:construction_kappa_choice} uses the minimum possible
number of seed symbols. For an admissible map, the allowed images are
\begin{equation}
\kappa(X)\in\{Z,Y\},\qquad
\kappa(Z)\in\{X,Y\},\qquad
\kappa(Y)\in\{X,Z\}.
\end{equation}
No single Pauli symbol belongs to all three sets. Hence an admissible
map cannot satisfy \(|\operatorname{Im}(\kappa)|=1\), and therefore
\begin{equation}
|\operatorname{Im}(\kappa)|\ge 2 .
\end{equation}
The choice in Eq.~\eqref{eq:construction_kappa_choice} has image
\(\{X,Y\}\), and therefore this lower bound is attained.

\subsection{Generation from a group label}
\label{subsec:construction_generation}

\emph{Generation} starts from the seed symbol in the group label generated in Reduction and expands it upward according to the recorded control-bit sequence.

\begin{definition}[Generation from a group label]
\label{def:generation_general_pauli}
Let
\begin{equation}
\sigma=(s,\bm{c}),
\qquad
s\in\operatorname{Im}(\kappa),
\quad
\bm{c}=(c_0,\dots,c_{m-1})\in\mathbb F_2^m.
\label{eq:construction_generation_input}
\end{equation}
The Generation algorithm starts from
\begin{equation}
\bm{k}^{(0)}:=s\in V
\label{eq:construction_generation_initial}
\end{equation}
and recursively defines \(\bm{k}^{(j+1)}\in V^{2^{j+1}}\) by
\begin{equation}
\bm{k}^{(j+1)}
:=
\begin{cases}
[\bm{k}^{(j)};\bm{0}_{2^j}], & c_j=1,\\
[\bm{k}^{(j)};\bm{k}^{(j)}], & c_j=0,
\end{cases}
\qquad
0\le j\le m-1.
\label{eq:construction_generation_rule}
\end{equation}
The final word \(\bm{k}^{(m)}\in V^{2^m}\) is called the generated sign-flip word associated with \(\sigma\).
\end{definition}

Since the original Hamiltonian acts on \(n\) qubits rather than on the padded length \(N\), we discard the padded tail of the generated word. After truncating \(\bm{k}^{(m)}\) to the first \(n\) entries, we obtain a word in \(V^n\). We denote this truncated word by \(\bm{k}(\sigma)\). 
Through the identification \(V^n\simeq\mathbb F_2^{2n}\) in Section~\ref{subsec:binary_symplectic_notation}, it is also viewed as a binary symplectic vector. 
We write
\begin{equation}
K(\sigma):=P(\bm{k}(\sigma)).
\label{eq:construction_K_sigma_definition}
\end{equation}

Fig.~\ref{fig:algorithm_flow} shows the data flow of Reduction and Generation for one Pauli term.

\begin{figure}[t]
  \centering
  \subfloat[Mapping to a binary-symplectic word.]{%
    \includegraphics[scale=0.75]{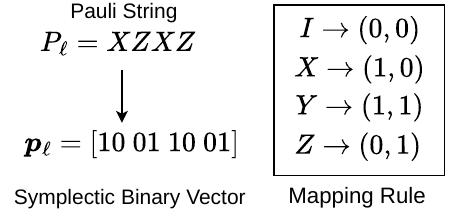}%
  }\\
  \subfloat[Reduction.]{%
    \includegraphics[scale=0.69]{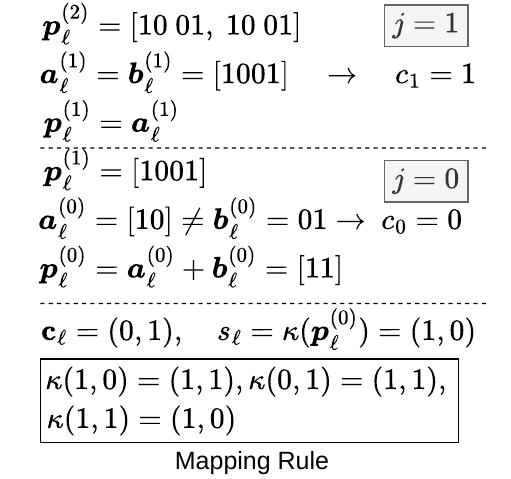}%
  }\\
  \subfloat[Generation.]{%
    \includegraphics[scale=0.69]{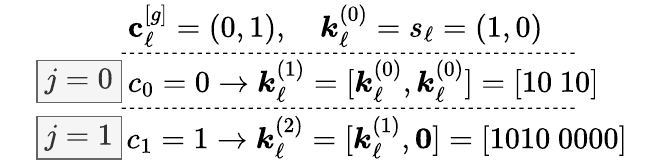}%
  }\\
  \subfloat[Mapping back to a Pauli string.]{%
    \includegraphics[scale=0.75]{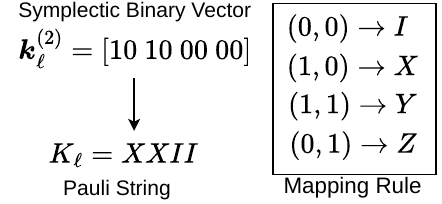}%
  }
    \caption{Internal flow of the recursive construction for one Pauli term. The Pauli string is converted to a binary-symplectic word, Reduction produces a control-bit sequence and a root symbol, Generation constructs the sign-flip word from the corresponding group label, and the result is mapped back to a Pauli string. The mapping rule used to convert the root symbol into the seed symbol is the concrete map \(\kappa\) specified in Eqs.~\eqref{eq:construction_kappa_choice} and~\eqref{eq:construction_kappa_choice_pauli}. The displayed example uses \(P_\ell=XZXZ\) with \(n=4\) and \(N=4\), and no zero padding is present.}
  \label{fig:algorithm_flow}
\end{figure}

\subsection{Term-groups and Their Sign-flip Pauli strings}
\label{subsec:construction_term_groups}

The group label assigned to each input term determines the term-group partition. Terms with the same label are placed in the same group. Since the algorithm groups term indices, each term-group is a subset of \(\{0,\dots,L-1\}\).

Applying Reduction to each padded input word \(\bm{p}_\ell^{(m)}\) gives a group label
\begin{equation}
\sigma_\ell
:=
\sigma(\bm{p}_\ell^{(m)})
=
(s_\ell,\bm{c}_\ell),
\label{eq:construction_each_label}
\end{equation}
where
\begin{equation}
s_\ell=\kappa\bigl(r(\bm{p}_\ell^{(m)})\bigr).
\label{eq:construction_each_seed_symbol}
\end{equation}
Let
\begin{equation}
\Sigma
:=
\{\sigma_\ell\mid 0\le \ell\le L-1\}
\label{eq:construction_distinct_labels}
\end{equation}
be the set of distinct group labels that appear among the input terms, and let
\begin{equation}
G:=|\Sigma|.
\label{eq:construction_number_of_groups}
\end{equation}
be the number of distinct labels.
Fix an enumeration
\begin{equation}
\Sigma=\{\sigma_{[0]},\dots,\sigma_{[G-1]}\},
\label{eq:construction_enumerated_seed_labels}
\end{equation}
where the elements \(\sigma_{[0]},\dots,\sigma_{[G-1]}\) are pairwise distinct. We then define an equivalence relation 
\begin{equation}
\gamma(\ell)=g
\qquad
\Longleftrightarrow
\qquad
\sigma_\ell=\sigma_{[g]},
\label{eq:construction_gamma_from_seed_label}
\end{equation}
where the map $\gamma$ is defined as Eq.~\eqref{eq:gamma}.
For each \(g\in\{0,\dots,G-1\}\), the associated term-group is the index set
\begin{equation}
\mathcal I_g
:=
\{\ell\in\{0,\dots,L-1\}\mid \gamma(\ell)=g\}.
\label{eq:construction_group_indices}
\end{equation}

For each distinct group label \(\sigma_{[g]}\), apply Generation and truncate the resulting word to the first \(n\) entries. This yields
\begin{equation}
\bm{k}_g:=\bm{k}(\sigma_{[g]})\in V^n
\label{eq:construction_generated_binary_vector}
\end{equation}
and the corresponding sign-flip Pauli string
\begin{equation}
K_g:=K(\sigma_{[g]})=P(\bm{k}_g).
\label{eq:construction_group_pauli}
\end{equation}
The overall algorithm is summarized in Algorithm~\ref{alg:construction_grouping}.

\begin{algorithm}[t]
\caption{Recursive term-group construction and sign-flip assignment}
\label{alg:construction_grouping}
\begin{algorithmic}[1]
\Require Non-identity Pauli terms \(P_0,\dots,P_{L-1}\) on \(n\) qubits.
\Ensure A partition \(\{\mathcal I_g\}_{g=0}^{G-1}\) of \(\{0,\dots,L-1\}\) and Pauli strings \(K_0,\dots,K_{G-1}\).
\State Set \(m\gets \lceil \log_2 n\rceil\) and \(N\gets 2^m\).
\For{\(\ell=0\) to \(L-1\)}
    \State Convert \(P_\ell\) to the \(V\)-valued word \(\bm{p}_\ell=(u_{\ell,0},\dots,u_{\ell,n-1})\).
    \State Pad \(\bm{p}_\ell\) with zeros to length \(N\) and denote the result by \(\bm{p}_\ell^{(m)}\).
    \For{\(j=m-1\) down to \(0\)}
        \State Split \(\bm{p}_\ell^{(j+1)}\) as \([\bm{a}_\ell^{(j)};\bm{b}_\ell^{(j)}]\).
        \If{\(\bm{a}_\ell^{(j)}=\bm{b}_\ell^{(j)}\)}
            \State \(\bm{p}_\ell^{(j)}\gets \bm{a}_\ell^{(j)}\).
            \State \(c_{\ell,j}\gets 1\).
        \Else
            \State \(\bm{p}_\ell^{(j)}\gets \bm{a}_\ell^{(j)}+\bm{b}_\ell^{(j)}\).
            \State \(c_{\ell,j}\gets 0\).
        \EndIf
    \EndFor
    \State Set \(\bm{c}_\ell\gets (c_{\ell,0},\dots,c_{\ell,m-1})\).
    \State Set \(s_\ell\gets \kappa(\bm{p}_\ell^{(0)})\).
    \State Set \(\sigma_\ell\gets (s_\ell,\bm{c}_\ell)\).
\EndFor
\State Form the set \(\Sigma=\{\sigma_\ell\mid 0\le \ell\le L-1\}\) and enumerate it as \(\{\sigma_{[0]},\dots,\sigma_{[G-1]}\}\).
\For{\(g=0\) to \(G-1\)}
    \State Define \(\mathcal I_g\gets \{\ell\mid \sigma_\ell=\sigma_{[g]}\}\).
    \State Write \(\sigma_{[g]}=(s_{[g]},\bm{c}_{[g]})\), where \(\bm{c}_{[g]}=(c_{{[g],0}},\dots,c_{[g],m-1})\).
    \State Initialize \(\bm{k}_g^{(0)}\gets s_{[g]}\).
    \For{\(j=0\) to \(m-1\)}
        \If{\(c_{{[g]},j}=1\)}
            \State \(\bm{k}_g^{(j+1)}\gets [\bm{k}_g^{(j)};\bm{0}_{2^j}]\).
        \Else
            \State \(\bm{k}_g^{(j+1)}\gets [\bm{k}_g^{(j)};\bm{k}_g^{(j)}]\).
        \EndIf
    \EndFor
    \State Truncate \(\bm{k}_g^{(m)}\) to its first \(n\) entries and denote the result by \(\bm{k}_g\in V^n\).
    \State Define \(K_g\gets P(\bm{k}_g)\).
\EndFor
\State Define \(\gamma(\ell)=g\) whenever \(\sigma_\ell=\sigma_{[g]}\).
\State \Return \(\{\mathcal I_g\}_{g=0}^{G-1}\) and \(K_0,\dots,K_{G-1}\).
\end{algorithmic}
\end{algorithm}


The output of Algorithm~\ref{alg:construction_grouping} gives the decomposition
\begin{equation}
H=\sum_{g=0}^{G-1} H_g,
\qquad
H_g:=\sum_{\ell\in\mathcal I_g} h_\ell P_\ell,
\label{eq:construction_grouped_hamiltonian}
\end{equation}
together with one sign-flip Pauli string \(K_g\) for each term-group. This completes the core recursive construction used in the remainder of the paper.

\subsection{Example}
\label{ex:construction_running_xyz}

We illustrate the construction on the \(4\)-qubit Hamiltonian
\begin{equation}
H
=
h_0\,XIIZ
+
h_1\,IZXI
+
h_2\,XZXZ
+
h_3\,YXZY.
\label{eq:construction_running_xyz_hamiltonian}
\end{equation}
Here \(n=4\), hence
\begin{equation}
m=\lceil \log_2 4\rceil=2,
\qquad
N=2^m=4,
\end{equation}
and therefore no zero padding is needed. In this example, we write the nonzero elements of \(V\) by the corresponding one-qubit Pauli symbols \(X,Z,Y\), and we write the zero element of \(V\) as \(0\).

Using the concrete map in Eq.~\eqref{eq:construction_kappa_choice}, in particular \(\kappa(Y)=X\),
Reduction gives the labels shown in Table~\ref{tab:construction_example_labels}.

\begin{table}[t]
\centering
\caption{Reduction output for the running example.}
\label{tab:construction_example_labels}
\small
\begin{tabular}{c c c c c}
\hline
\(\ell\) & \(P_\ell\) & \(r(\bm{p}_\ell^{(2)})\) & \(\bm{c}_\ell\) & \(\sigma_\ell\) \\
\hline
\(0\) & \(XIIZ\) & \(Y\) & \((0,0)\) & \((X,(0,0))\) \\
\(1\) & \(IZXI\) & \(Y\) & \((0,0)\) & \((X,(0,0))\) \\
\(2\) & \(XZXZ\) & \(Y\) & \((0,1)\) & \((X,(0,1))\) \\
\(3\) & \(YXZY\) & \(Y\) & \((0,0)\) & \((X,(0,0))\) \\
\hline
\end{tabular}
\end{table}

Hence there are two distinct labels,
\begin{equation}
\sigma_{[0]}=(X,(0,0)),
\qquad
\sigma_{[1]}=(X,(0,1)).
\end{equation}
Algorithm~\ref{alg:construction_grouping} returns the term-groups
\begin{equation}
\mathcal I_0=\{0,1,3\},
\qquad
\mathcal I_1=\{2\}.
\label{eq:construction_running_xyz_groups}
\end{equation}
Generation gives the corresponding sign-flip Pauli strings
\begin{equation}
K_0=XXXX,
\qquad
K_1=XXII.
\label{eq:construction_running_xyz_K}
\end{equation}
It can be directly verified that \(K_0\) anti-commutes with every Pauli term in \(\mathcal I_0\), and \(K_1\) anti-commutes with the unique Pauli term in \(\mathcal I_1\). Therefore the resulting grouped Hamiltonians are 
\begin{equation}
\begin{split}
H_0&=h_0\,XIIZ+h_1\,IZXI+h_3\,YXZY,\\
H_1&=h_2\,XZXZ.
\label{eq:construction_running_xyz_grouped_hamiltonians}
\end{split}
\end{equation}
Section~\ref{sec:grouping} proves that this construction yields a sign-flip Pauli string for every resulting term-group.

\section{Correctness and structural properties of the construction}
\label{sec:grouping}

We now prove the anti-commutation property of the Pauli strings generated by the algorithm and analyze the structure of the resulting term-groups. We first establish the stage-wise invariant that connects Reduction and Generation, and then prove the correctness of the generated sign-flip Pauli strings. We next show two structural invariances of the group label and explain how the resulting term-groups are used in the product-formula implementation. Finally, we summarize the bounds on the number of term-groups, the weights of the generated Pauli strings, and the preprocessing cost.

\subsection{Correctness of the Recursive Construction}
\label{subsec:correctness_grouping_algorithm}

For each positive integer $r$, define the symplectic form on $V^r$ by
\begin{equation}
\begin{split}    
\Omega_r(\bm{u},\bm{v})
&:=
\sum_{t=0}^{r-1} \omega(u_t,v_t),\\
\bm{u}=(u_0,\dots,u_{r-1})\in V^r,
\quad
&\bm{v}=(v_0,\dots,v_{r-1})\in V^r.
\end{split}
\label{eq:grouping_word_symplectic_form}
\end{equation}
When $r=n$, this agrees with the global symplectic inner product in Eq.~\eqref{eq:problem_global_symplectic_form}. When $r=1$, we identify $\Omega_1(u,v)$ with $\omega(u,v)$ for $u,v\in V$.

\begin{lemma}[Stage-wise symplectic invariance]
\label{lem:grouping_stagewise_invariance}
Let $\bm{p}^{(m)}\in V^{2^m}$ be nonzero, and let
\begin{equation}
\bm{p}^{(0)},\bm{p}^{(1)},\dots,\bm{p}^{(m)}
\label{eq:grouping_reduction_sequence}
\end{equation}
be its Reduction sequence, with associated bit string
\begin{equation}
\bm{c}=(c_0,\dots,c_{m-1})\in\mathbb F_2^m.
\label{eq:grouping_control_bits_again}
\end{equation}
Let
\begin{equation}
\bm{k}^{(0)},\bm{k}^{(1)},\dots,\bm{k}^{(m)}
\label{eq:grouping_generation_sequence}
\end{equation}
be the Generation sequence obtained from a group label $(s,\bm{c})$ with $s=\kappa(r(\bm{p}^{(m)}))$. Then, for every $j\in\{0,\dots,m-1\}$,
\begin{equation}
\Omega_{2^{j+1}}\bigl(\bm{p}^{(j+1)},\bm{k}^{(j+1)}\bigr)
=
\Omega_{2^j}\bigl(\bm{p}^{(j)},\bm{k}^{(j)}\bigr).
\label{eq:grouping_stagewise_invariance_eq}
\end{equation}
\end{lemma}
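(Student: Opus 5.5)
The plan is a direct computation at a single level $j$, split into the two cases $c_j=1$ and $c_j=0$. The one structural fact needed is that $\Omega_r$ is additive over concatenation. For $\bm{u}=[\bm{u}_1;\bm{u}_2]$ and $\bm{v}=[\bm{v}_1;\bm{v}_2]$ with all blocks in $V^{2^j}$,
\begin{equation}
\Omega_{2^{j+1}}(\bm{u},\bm{v})=\Omega_{2^j}(\bm{u}_1,\bm{v}_1)+\Omega_{2^j}(\bm{u}_2,\bm{v}_2) \pmod 2 .
\end{equation}
This holds because $\Omega_r$ is a coordinate-wise sum of the local forms $\omega$ in Eq.~\eqref{eq:grouping_word_symplectic_form}. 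I also use that $\Omega_{2^j}$ is bilinear over $\mathbb F_2$ and vanishes whenever one argument is the zero word, since $\omega$ is bilinear.

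Both sequences are indexed by the same bit string $\bm{c}$. Generation uses exactly the control bits produced by Reduction of $\bm{p}^{(m)}$, because the label is $(s,\bm{c})$ with $\bm{c}=\bm{c}(\bm{p}^{(m)})$. So at level $j$ I write $\bm{p}^{(j+1)}=[\bm{a}^{(j)};\bm{b}^{(j)}]$ and apply the same $c_j$ to both the Reduction rule Eq.~\eqref{eq:construction_reduction_rule} and the Generation rule Eq.~\eqref{eq:construction_generation_rule}.

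\emph{Case $c_j=1$.} Here $\bm{a}^{(j)}=\bm{b}^{(j)}$, $\bm{p}^{(j)}=\bm{a}^{(j)}$, and $\bm{k}^{(j+1)}=[\bm{k}^{(j)};\bm{0}_{2^j}]$. Additivity gives
\begin{equation}
\Omega_{2^{j+1}}\bigl(\bm{p}^{(j+1)},\bm{k}^{(j+1)}\bigr)=\Omega_{2^j}\bigl(\bm{a}^{(j)},\bm{k}^{(j)}\bigr)+\Omega_{2^j}\bigl(\bm{b}^{(j)},\bm{0}_{2^j}\bigr)=\Omega_{2^j}\bigl(\bm{p}^{(j)},\bm{k}^{(j)}\bigr).
\end{equation}

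\emph{Case $c_j=0$.} Here $\bm{p}^{(j)}=\bm{a}^{(j)}+\bm{b}^{(j)}$ and $\bm{k}^{(j+1)}=[\bm{k}^{(j)};\bm{k}^{(j)}]$. Additivity followed by linearity in the first argument gives
\begin{equation}
\Omega_{2^j}\bigl(\bm{a}^{(j)},\bm{k}^{(j)}\bigr)+\Omega_{2^j}\bigl(\bm{b}^{(j)},\bm{k}^{(j)}\bigr)=\Omega_{2^j}\bigl(\bm{a}^{(j)}+\bm{b}^{(j)},\bm{k}^{(j)}\bigr)=\Omega_{2^j}\bigl(\bm{p}^{(j)},\bm{k}^{(j)}\bigr).
\end{equation}
In the first case the equality of the halves is never actually used; the vanishing of $\Omega$ against the zero block suffices.

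I do not expect a real obstacle here. The only points needing care are bookkeeping. One is the alignment of the block decomposition of $\bm{k}^{(j+1)}$ with that of $\bm{p}^{(j+1)}$: both split at position $2^j$ into an upper half and a lower half, which is what justifies applying additivity. The other is that the lemma concerns the untruncated words in $V^{2^m}$, so padding and truncation play no role. Neither the nonzero-root property of Proposition~\ref{prop:construction_nonzero_root} nor the defining property of $\kappa$ is needed for this lemma. They enter only afterwards, when the invariance is chained from level $m$ down to level $0$ to obtain $\Omega=\omega(r,\kappa(r))=1$.
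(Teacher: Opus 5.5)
Your proposal is correct and matches the paper's proof. Both use the same case split on $c_j$, with additivity of $\Omega$ over the upper and lower halves, vanishing against the zero block when $c_j=1$, and linearity in the first argument when $c_j=0$; your observation that $\bm{a}^{(j)}=\bm{b}^{(j)}$ and the defining property of $\kappa$ are not needed at this stage is also accurate.
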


\begin{proof}
Fix $j\in\{0,\dots,m-1\}$ and write
\begin{equation}
\bm{p}^{(j+1)}=[\bm{a}^{(j)};\bm{b}^{(j)}],
\qquad
\bm{a}^{(j)},\bm{b}^{(j)}\in V^{2^j}.
\label{eq:grouping_split_for_proof}
\end{equation}
If $c_j=1$, then $\bm{a}^{(j)}=\bm{b}^{(j)}$, $\bm{p}^{(j)}=\bm{a}^{(j)}$, and $\bm{k}^{(j+1)}=[\bm{k}^{(j)};\bm{0}_{2^j}]$. By definition of $\Omega_{2^{j+1}}$,
\begin{equation}
\begin{split}
\Omega_{2^{j+1}}\bigl(\bm{p}^{(j+1)},\bm{k}^{(j+1)}\bigr)
&=\Omega_{2^{j+1}}\bigl([\bm{a}^{(j)};\bm{a}^{(j)}],[\bm{k}^{(j)};\bm{0}_{2^j}]\bigr)\\
&=\Omega_{2^j}\bigl(\bm{a}^{(j)},\bm{k}^{(j)}\bigr)+\Omega_{2^j}\bigl(\bm{a}^{(j)},\bm{0}_{2^j}\bigr)\\
&=\Omega_{2^j}\bigl(\bm{p}^{(j)},\bm{k}^{(j)}\bigr).
\end{split}
\label{eq:grouping_case_one_invariance}
\end{equation}

If $c_j=0$, then $\bm{p}^{(j)}=\bm{a}^{(j)}+\bm{b}^{(j)}$ and $\bm{k}^{(j+1)}=[\bm{k}^{(j)};\bm{k}^{(j)}]$. Again by definition of $\Omega_{2^{j+1}}$ and bilinearity of $\omega$,
\begin{equation}
\begin{split}
\Omega_{2^{j+1}}\bigl(\bm{p}^{(j+1)},\bm{k}^{(j+1)}\bigr)
&=\Omega_{2^{j+1}}\bigl([\bm{a}^{(j)};\bm{b}^{(j)}],[\bm{k}^{(j)};\bm{k}^{(j)}]\bigr)\\
&=\Omega_{2^j}\bigl(\bm{a}^{(j)},\bm{k}^{(j)}\bigr)+\Omega_{2^j}\bigl(\bm{b}^{(j)},\bm{k}^{(j)}\bigr)\\
&=\Omega_{2^j}\bigl(\bm{a}^{(j)}+\bm{b}^{(j)},\bm{k}^{(j)}\bigr)\\
&=\Omega_{2^j}\bigl(\bm{p}^{(j)},\bm{k}^{(j)}\bigr).
\end{split}
\label{eq:grouping_case_zero_invariance}
\end{equation}
The third equality uses bilinearity over \(\mathbb F_2\).
Thus Eq.~\eqref{eq:grouping_stagewise_invariance_eq} holds in both cases.
\end{proof}

\begin{theorem}[Correctness of Algorithm~\ref{alg:construction_grouping}]
\label{thm:grouping_correctness}
For every term-group $g\in\{0,\dots,G-1\}$ produced by Algorithm~\ref{alg:construction_grouping}, and for every $\ell\in\mathcal I_g$, the generated binary symplectic vector $\bm{k}_g$ satisfies
\begin{equation}
\Omega(\bm{p}_\ell,\bm{k}_g)=1.
\label{eq:grouping_correctness_binary}
\end{equation}
Equivalently,
\begin{equation}
\{K_g,P_\ell\}=0
\qquad
\text{for every }\ell\in\mathcal I_g.
\label{eq:grouping_correctness_pauli}
\end{equation}
Consequently, the term-group Hamiltonian
\begin{equation}
H_g=\sum_{\ell\in\mathcal I_g} h_\ell P_\ell
\label{eq:grouping_correctness_group_hamiltonian}
\end{equation}
satisfies
\begin{equation}
K_g H_g K_g=-H_g.
\label{eq:grouping_correctness_signflip}
\end{equation}
\end{theorem}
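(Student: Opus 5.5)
The plan is to chain the stage-wise invariance of Lemma~\ref{lem:grouping_stagewise_invariance} from the top level $m$ down to the root, and then to remove the zero padding. Fix $g$ and $\ell\in\mathcal I_g$. By construction, $\sigma_\ell=\sigma_{[g]}$, so the label $(s_{[g]},\bm{c}_{[g]})$ used in Generation equals $(\kappa(r(\bm{p}_\ell^{(m)})),\bm{c}(\bm{p}_\ell^{(m)}))$. Hence the Generation sequence $\bm{k}_g^{(0)},\dots,\bm{k}_g^{(m)}$ is exactly the one built from the Reduction data of $\bm{p}_\ell^{(m)}$. Lemma~\ref{lem:grouping_stagewise_invariance} therefore applies to the pair $(\bm{p}_\ell^{(j)},\bm{k}_g^{(j)})$. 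This is the point where grouping by identical labels matters: every member of the group shares the same control bits and seed. Note that $\bm{p}_\ell^{(m)}\neq\bm{0}_N$ because $P_\ell$ is non-identity, so the lemma's hypothesis holds.

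Applying the lemma for $j=m-1,\dots,0$ and telescoping gives
\[
\Omega_{N}\bigl(\bm{p}_\ell^{(m)},\bm{k}_g^{(m)}\bigr)=\Omega_1\bigl(\bm{p}_\ell^{(0)},\bm{k}_g^{(0)}\bigr)=\omega\bigl(r(\bm{p}_\ell^{(m)}),\kappa(r(\bm{p}_\ell^{(m)}))\bigr).
\]
By Proposition~\ref{prop:construction_nonzero_root}, the root is nonzero, so $\kappa$ is defined there. Since $\omega$ is symmetric over $\mathbb F_2$, Eq.~\eqref{eq:construction_kappa_anticommutes} shows that the right-hand side equals $1$.

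Next I remove the padding. Entries $n,\dots,N-1$ of $\bm{p}_\ell^{(m)}$ are $\bm{0}$, so they contribute $\omega(\bm{0},\cdot)=0$ to $\Omega_N$. Hence $\Omega_N(\bm{p}_\ell^{(m)},\bm{k}_g^{(m)})=\Omega_n(\bm{p}_\ell,\bm{k}_g)$, where $\bm{k}_g$ is the truncation to the first $n$ entries. This gives Eq.~\eqref{eq:grouping_correctness_binary}.

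Finally, Eq.~\eqref{eq:problem_anticommutation_condition} converts this into $\{K_g,P_\ell\}=0$. The computation in Eq.~\eqref{eq:problem_groupwise_sign_flip} then yields $K_gH_gK_g=-H_g$, using $K_g^2=I$ and real $h_\ell$.

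None of these steps is a real obstacle. The only point needing care is the bookkeeping that the Generation sequence for group $g$ coincides with the one the lemma pairs with $\bm{p}_\ell$, and that truncation commutes with the symplectic form because the padded tail is zero.
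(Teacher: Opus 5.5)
Your proposal is correct and follows the paper's proof essentially step for step: you identify the group label with the Reduction data of $\bm{p}_\ell^{(m)}$, telescope Lemma~\ref{lem:grouping_stagewise_invariance} down to $\omega(r_\ell,\kappa(r_\ell))=1$, discard the zero-padded tail, and then convert to anti-commutation and the sign flip. You also make explicit two points the paper leaves implicit: the symmetry of $\omega$, which is needed to match the argument order in Eq.~\eqref{eq:construction_kappa_anticommutes}, and the nonzero-input hypothesis of the lemma.
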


\begin{proof}
Fix $g$ and $\ell\in\mathcal I_g$. By Eq.~\eqref{eq:construction_gamma_from_seed_label},
\begin{equation}
\sigma_\ell=\sigma_{[g]}=(s_{[g]},\bm{c}_{[g]}),
\label{eq:grouping_correctness_same_seed_label}
\end{equation}
where $s_{[g]}$ and $\bm{c}_{[g]}$ are a seed symbol and control bit sequence in the label $\sigma_{[g]}$, respectively.
Let
\begin{equation}
\bm{k}_g^{(0)},\bm{k}_g^{(1)},\dots,\bm{k}_g^{(m)}
\label{eq:grouping_correctness_generated_sequence}
\end{equation}
be the Generation sequence obtained from $\sigma_{[g]}$. Applying Lemma~\ref{lem:grouping_stagewise_invariance} repeatedly gives
\begin{equation}
\Omega_N\bigl(\bm{p}_\ell^{(m)},\bm{k}_g^{(m)}\bigr)
=
\Omega_1\bigl(\bm{p}_\ell^{(0)},\bm{k}_g^{(0)}\bigr).
\label{eq:grouping_correctness_iterated_invariance}
\end{equation}
Let $r_\ell:=r(\bm{p}_\ell^{(m)})\in V\setminus\{0\}$. Then $\bm{k}_g^{(0)}=s_{[g]}=\kappa(r_\ell)$, and Eq.~\eqref{eq:construction_kappa_anticommutes} implies
\begin{equation}
\Omega_1\bigl(\bm{p}_\ell^{(0)},\bm{k}_g^{(0)}\bigr)
=
\omega\bigl(r_\ell,\kappa(r_\ell)\bigr)
=
1.
\label{eq:grouping_correctness_base_case}
\end{equation}
Therefore,
\begin{equation}
\Omega_N\bigl(\bm{p}_\ell^{(m)},\bm{k}_g^{(m)}\bigr)=1.
\label{eq:grouping_correctness_extended}
\end{equation}
Since $\bm{p}_\ell^{(m)}$ is the zero-padded version of $\bm{p}_\ell$ and $\bm{k}_g$ is the truncation of $\bm{k}_g^{(m)}$, the padded tail does not contribute to the symplectic inner product. Hence Eq.~\eqref{eq:grouping_correctness_binary} follows from Eq.~\eqref{eq:grouping_correctness_extended}. The equivalence with Eq.~\eqref{eq:grouping_correctness_pauli} follows from Eq.~\eqref{eq:problem_anticommutation_condition}. Finally, linearity gives Eq.~\eqref{eq:grouping_correctness_signflip}.
\end{proof}

This is the property required by Eq.~\eqref{eq:problem_groupwise_evolution_flip} for ancilla-control removal.

\subsection{Structural Properties of Term-groups}
\label{subsec:term_group_structure}

This subsection identifies structural transformations of padded binary-symplectic words that preserve the group label produced by Algorithm~\ref{alg:construction_grouping}. These properties explain how structurally related Pauli patterns can be assigned to the same term-group. When such terms also act on disjoint sets of system qubits, the corresponding elementary Pauli rotations can be placed in a common logical parallel step.

We establish two invariance properties of the recursive label map: interleaved cyclic-shift invariance and block-wise half-swap invariance.

We first consider cyclic shifts of the interleaved binary representation. For a word
\(\bm{p}=(u_0,\dots,u_{2^r-1})\in V^{2^r}\), with \(u_j=(x_j,z_j)\), define \(\mathsf{C}(\bm{p})\in V^{2^r}\) to be the \(V\)-valued word obtained by cyclically shifting the interleaved bit string
\begin{equation}
(x_0,z_0,x_1,z_1,\dots,x_{2^r-1},z_{2^r-1})
\end{equation}
by one bit and then regrouping consecutive bits into pairs. Equivalently, if
\begin{equation}
\mathsf{C}(\bm{p})=(v_0,\dots,v_{2^r-1}),
\label{eq:grouping_structure_T}
\end{equation}
then
\begin{equation}
v_j=((u_{j-1})_Z,(u_j)_X),
\qquad
j=0,\dots,2^r-1,
\label{eq:grouping_structure_T_explicit_clean}
\end{equation}
where \(j-1\) is taken modulo \(2^r\). For \(2^r=2\), this operation maps
\begin{equation}
(x_0,z_0,x_1,z_1)
\longmapsto
(z_1,x_0,z_0,x_1),
\label{eq:grouping_interleaved_shift_example}
\end{equation}
which gives \(v_0=(z_1,x_0)\) and \(v_1=(z_0,x_1)\).

The next lemma shows that one step of Reduction is compatible with this operation.

\begin{lemma}[Compatibility of Reduction with the interleaved cyclic shift]
\label{lem:grouping_shift_reduction_compatibility}
Let \(\bm{p}^{(r)}\in V^{2^r}\) be nonzero, and let \(\bm{p}^{(r-1)}\) be the word obtained from \(\bm{p}^{(r)}\) by one step of Reduction. Set
\begin{equation}
\widetilde{\bm{p}}^{(r)}:=\mathsf{C}(\bm{p}^{(r)}),
\end{equation}
and let \(\widetilde{\bm{p}}^{(r-1)}\) be the word obtained from \(\widetilde{\bm{p}}^{(r)}\) by one step of Reduction. Then the first control bits of \(\bm{p}^{(r)}\) and \(\widetilde{\bm{p}}^{(r)}\) are equal, and
\begin{equation}
\widetilde{\bm{p}}^{(r-1)}=\mathsf{C}(\bm{p}^{(r-1)}).
\label{eq:grouping_structure_shift_reduction_compatibility_clean}
\end{equation}
\end{lemma}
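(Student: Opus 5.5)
The plan is to describe how $\mathsf{C}$ acts on the two halves of a word, and then check the two cases of the Reduction rule directly. Write $h:=2^{r-1}$ and $\bm{p}^{(r)}=[\bm{a};\bm{b}]$ with $\bm{a}=(a_0,\dots,a_{h-1})$ and $\bm{b}=(b_0,\dots,b_{h-1})$ in $V^{h}$. From Eq.~\eqref{eq:grouping_structure_T_explicit_clean}, the first half of $\mathsf{C}(\bm{p}^{(r)})$ is computed almost entirely from $\bm{a}$. The only exception is entry $0$, which takes its $Z$-bit from $b_{h-1}$. Symmetrically, the second half is computed from $\bm{b}$, except that its first entry takes its $Z$-bit from $a_{h-1}$.

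To capture this, I introduce an auxiliary map $\mathsf{S}:V^{h}\times V^{h}\to V^{h}$ by
\begin{equation*}
\mathsf{S}(\bm{a},\bm{b})_0:=\bigl((b_{h-1})_Z,(a_0)_X\bigr),\qquad
\mathsf{S}(\bm{a},\bm{b})_j:=\bigl((a_{j-1})_Z,(a_j)_X\bigr)\quad(1\le j\le h-1).
\end{equation*}
The first step is to verify by index bookkeeping that
\begin{equation*}
\mathsf{C}([\bm{a};\bm{b}])=[\mathsf{S}(\bm{a},\bm{b});\mathsf{S}(\bm{b},\bm{a})].
\end{equation*}
The second step is to record three facts about $\mathsf{S}$ and $\mathsf{C}$:
\begin{itemize}
\item $\mathsf{S}$ is $\mathbb F_2$-linear in the pair $(\bm{a},\bm{b})$, because each output bit is a single input bit.
\item $\mathsf{S}(\bm{a},\bm{a})=\mathsf{C}(\bm{a})$, where $\mathsf{C}$ on $V^{h}$ uses indices modulo $h$.
\item $\mathsf{C}$ is a bit permutation, hence a linear bijection, so $\mathsf{C}(\bm{d})=\bm{0}_h$ if and only if $\bm{d}=\bm{0}_h$.
\end{itemize}

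With these facts, set $\widetilde{\bm{a}}:=\mathsf{S}(\bm{a},\bm{b})$ and $\widetilde{\bm{b}}:=\mathsf{S}(\bm{b},\bm{a})$, and let $\bm{d}:=\bm{a}+\bm{b}$. By linearity,
\begin{equation*}
\widetilde{\bm{a}}+\widetilde{\bm{b}}=\mathsf{S}(\bm{a}+\bm{b},\bm{b}+\bm{a})=\mathsf{S}(\bm{d},\bm{d})=\mathsf{C}(\bm{d}).
\end{equation*}
By injectivity of $\mathsf{C}$, this gives $\widetilde{\bm{a}}=\widetilde{\bm{b}}$ if and only if $\bm{a}=\bm{b}$. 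So the first control bits agree.

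The compatibility identity then follows case by case. If $c=1$, then $\bm{a}=\bm{b}$ and $\bm{p}^{(r-1)}=\bm{a}$. Hence $\widetilde{\bm{p}}^{(r-1)}=\widetilde{\bm{a}}=\mathsf{S}(\bm{a},\bm{a})=\mathsf{C}(\bm{a})$. If $c=0$, then $\widetilde{\bm{p}}^{(r-1)}=\widetilde{\bm{a}}+\widetilde{\bm{b}}=\mathsf{C}(\bm{a}+\bm{b})=\mathsf{C}(\bm{p}^{(r-1)})$. The nonzero hypothesis is not needed here. It only ensures, via Proposition~\ref{prop:construction_nonzero_root}, that iterating the lemma stays within nonzero words.

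The main obstacle is the first step: handling the wraparound index correctly. The shifted halves are not $\mathsf{C}(\bm{a})$ and $\mathsf{C}(\bm{b})$ separately; a single $Z$-bit crosses between the halves. Everything else reduces to linearity together with the identity $\mathsf{S}(\bm{d},\bm{d})=\mathsf{C}(\bm{d})$, which I will check entry by entry, including entry $0$, where the index $h-1$ is exactly the modulo-$h$ predecessor of $0$.
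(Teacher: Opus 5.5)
Your proof is correct, but it is organized differently from the paper's.

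\textbf{The paper's route.} The paper argues purely in coordinates. With $v_j=(z_{j-1},x_j)$, it asserts ``by the explicit formula'' that the conditions $u_j=u_{j+2^{r-1}}$ and $v_j=v_{j+2^{r-1}}$ (for $0\le j\le 2^{r-1}-1$) are equivalent. In the case $c=1$ it then truncates to $(v_0,\dots,v_{2^{r-1}-1})$. In the case $c=0$ it computes $v_j+v_{j+2^{r-1}}=(z_{j-1}+z_{j+2^{r-1}-1},\,x_j+x_{j+2^{r-1}})$ and recognizes this as the $j$-th entry of $\mathsf{C}(\bm{p}^{(r-1)})$.

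\textbf{Your route.} You split $\mathsf{C}$ across the two halves through $\mathsf{S}$. You then derive the control-bit equality and both cases from two identities, $\widetilde{\bm{a}}+\widetilde{\bm{b}}=\mathsf{C}(\bm{a}+\bm{b})$ and $\mathsf{S}(\bm{a},\bm{a})=\mathsf{C}(\bm{a})$. In particular, equality of the control bits follows from injectivity of the bit permutation $\mathsf{C}$, not from an entrywise comparison.

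\textbf{What each buys.} The paper's version is shorter and needs no auxiliary map. Yours makes explicit the wraparound that the paper leaves implicit:
\begin{itemize}
\item In the paper's equivalence of control-bit conditions, the requirement $z_{2^{r-1}-1}=z_{2^r-1}$ comes only from the $j=0$ entry.
\item In the paper's $c=1$ case, identifying $v_0=(z_{2^r-1},x_0)$ with $\mathsf{C}(\bm{p}^{(r-1)})_0=(z_{2^{r-1}-1},x_0)$ silently uses $\bm{a}=\bm{b}$. That step is exactly your identity $\mathsf{S}(\bm{a},\bm{a})=\mathsf{C}(\bm{a})$.
\end{itemize}
Your observation that the nonzero hypothesis plays no role in this one-step statement is also correct.
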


\begin{proof}
Write
\begin{equation}
\bm{p}^{(r)}=(u_0,\dots,u_{2^r-1}),
\qquad
u_j=(x_j,z_j),
\end{equation}
which gives
\begin{equation}
\widetilde{\bm{p}}^{(r)}=(v_0,\dots,v_{2^r-1}),
\qquad
v_j=(z_{j-1},x_j).
\end{equation}
Here \(j-1\) is taken modulo \(2^r\).

The first control bit of \(\bm{p}^{(r)}\) is \(1\) exactly when
\begin{equation}
u_j=u_{j+2^{r-1}}
\qquad
(j=0,\dots,2^{r-1}-1).
\end{equation}
By the explicit formula for \(v_j\), this is equivalent to
\begin{equation}
v_j=v_{j+2^{r-1}}
\qquad
(j=0,\dots,2^{r-1}-1),
\end{equation}
and the first control bit is unchanged.

If this control bit is \(1\), then one step of Reduction keeps the first half of each word. Hence
\begin{equation}
\bm{p}^{(r-1)}=(u_0,\dots,u_{2^{r-1}-1}),
\qquad
\widetilde{\bm{p}}^{(r-1)}=(v_0,\dots,v_{2^{r-1}-1}),
\end{equation}
and Eq.~\eqref{eq:grouping_structure_shift_reduction_compatibility_clean} holds.

If the first control bit is \(0\), then Reduction replaces each pair of corresponding entries by their sum. Thus the \(j\)-th entry of \(\bm{p}^{(r-1)}\) is
\begin{equation}
u_j+u_{j+2^{r-1}}
=
(x_j+x_{j+2^{r-1}},\,z_j+z_{j+2^{r-1}}),
\end{equation}
while the \(j\)-th entry of \(\widetilde{\bm{p}}^{(r-1)}\) is
\begin{equation}
v_j+v_{j+2^{r-1}}
=
(z_{j-1}+z_{j+2^{r-1}-1},\,x_j+x_{j+2^{r-1}}).
\end{equation}
This is the \(j\)-th entry of \(\mathsf{C}(\bm{p}^{(r-1)})\). Therefore Eq.~\eqref{eq:grouping_structure_shift_reduction_compatibility_clean} holds in both cases.
\end{proof}

We can now deduce that cyclic shifts preserve the group label.

\begin{proposition}[Interleaved cyclic-shift invariance for the concrete seed map]
\label{prop:grouping_interleaved_shift_invariance}
Assume that \(\kappa\) is the concrete map in Eq.~\eqref{eq:construction_kappa_choice}. Let \(\bm{p}^{(m)}\in V^N\) be a nonzero padded word. Then \(\bm{p}^{(m)}\) and \(\mathsf{C}(\bm{p}^{(m)})\) produce the same control-bit sequence under Reduction. Moreover, if
\[
r(\bm{p}^{(m)})=(x,z),
\]
then
\begin{equation}
r(\mathsf{C}(\bm{p}^{(m)}))=(z,x).
\label{eq:grouping_structure_shift_root_relation_clean}
\end{equation}
Consequently, \(\bm{p}^{(m)}\) and \(\mathsf{C}(\bm{p}^{(m)})\) produce the same group label.
\end{proposition}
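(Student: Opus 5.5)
The plan is an induction on the reduction level that applies Lemma~\ref{lem:grouping_shift_reduction_compatibility} once per level, followed by a direct check of the root symbol and the seed map. Write \(\widetilde{\bm{p}}^{(m)}:=\mathsf{C}(\bm{p}^{(m)})\) and let \(\widetilde{\bm{p}}^{(j)}\) be its Reduction sequence.

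First, \(\mathsf{C}\) is a bijection, since it is a cyclic bit shift followed by regrouping. Hence \(\widetilde{\bm{p}}^{(m)}\) is nonzero, and by Proposition~\ref{prop:construction_nonzero_root} every reduced word in both sequences is nonzero. This matters because Lemma~\ref{lem:grouping_shift_reduction_compatibility} assumes a nonzero input at each level.

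We then claim, for \(j=m,m-1,\dots,0\), that
\[
\widetilde{\bm{p}}^{(j)}=\mathsf{C}(\bm{p}^{(j)})
\]
and that the control bits at level \(j\) agree for \(j<m\). The case \(j=m\) holds by definition. For the inductive step, apply Lemma~\ref{lem:grouping_shift_reduction_compatibility} with \(r=j+1\) to the nonzero word \(\bm{p}^{(j+1)}\). Its shifted word is \(\widetilde{\bm{p}}^{(j+1)}\) by the inductive hypothesis, so the two words share the control bit \(c_j\), and one Reduction step gives \(\widetilde{\bm{p}}^{(j)}=\mathsf{C}(\bm{p}^{(j)})\). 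After \(m\) steps the control-bit sequences coincide.

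At \(j=0\) the word has length \(2^0=1\). Evaluating the explicit formula for \(\mathsf{C}\) with \(j-1\equiv 0 \pmod 1\) gives \(v_0=((u_0)_Z,(u_0)_X)\). So if \(r(\bm{p}^{(m)})=(x,z)\), then \(r(\widetilde{\bm{p}}^{(m)})=(z,x)\), which is Eq.~\eqref{eq:grouping_structure_shift_root_relation_clean}. The one subtlety is that the lemma is stated for \(r\ge 1\), and the modular index convention at length one has to be read as the swap. I would state this explicitly as the definition of \(\mathsf{C}\) on \(V^1\).

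Finally, the swap \((x,z)\mapsto(z,x)\) exchanges \(X\) and \(Z\) and fixes \(Y\). For the concrete map in Eq.~\eqref{eq:construction_kappa_choice} we have \(\kappa(X)=\kappa(Z)=Y\) and \(\kappa(Y)=X\). Therefore \(\kappa(z,x)=\kappa(x,z)\) for every nonzero root, so the seed symbols agree. Combined with the equal control-bit sequences, this gives equal group labels. The main point to watch is the length-one base case of \(\mathsf{C}\) noted above; the rest is bookkeeping.
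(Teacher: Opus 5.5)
Your proposal is correct and follows essentially the same route as the paper: you iterate Lemma~\ref{lem:grouping_shift_reduction_compatibility} down the $m$ reduction levels to obtain equal control bits and $\widetilde{\bm{p}}^{(0)}=\mathsf{C}(\bm{p}^{(0)})$, read $\mathsf{C}$ on $V^1$ as the swap $(x,z)\mapsto(z,x)$, and check that the concrete $\kappa$ is invariant under this swap. Your extra remarks only make explicit what the paper's short proof leaves implicit, namely that the intermediate words are nonzero (via bijectivity of $\mathsf{C}$ and Proposition~\ref{prop:construction_nonzero_root}) and that $\mathsf{C}$ on length-one words is the swap.
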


\begin{proof}
Apply Lemma~\ref{lem:grouping_shift_reduction_compatibility} repeatedly through the \(m\) reduction steps. This shows that every control bit is preserved. At the final step, the reduced word has length one, and \(\mathsf{C}\) maps \((x,z)\) to \((z,x)\). Hence Eq.~\eqref{eq:grouping_structure_shift_root_relation_clean} follows.

For the concrete map in Eq.~\eqref{eq:construction_kappa_choice},
\begin{equation}
\kappa((z,x))=\kappa((x,z))
\qquad
((x,z)\in V\setminus\{0\}),
\end{equation}
because \(\kappa((1,0))=\kappa((0,1))=(1,1)\) and \(\kappa((1,1))=(1,0)\). Thus the control-bit sequence and the seed symbol are both unchanged, and the two words produce the same group label.
\end{proof}

As an immediate consequence, any number of repeated cyclic shifts of the interleaved binary representation preserves the group label.

We next consider another operation that preserves the group label, namely exchanging the two halves inside each block of a fixed length.

\begin{definition}[Block-wise half-swap]
\label{def:grouping_blockwise_half_swap}
For each integer \(r\ge 1\) and each \(l\in\{0,\dots,r-1\}\), define
\(\mathsf{B}^{(r)}_l:V^{2^r}\to V^{2^r}\) by
\begin{equation}
\bigl(\mathsf{B}^{(r)}_l(\bm{u})\bigr)_j
=
u_{j\oplus 2^l},
\qquad
0\le j\le 2^r-1,
\label{eq:grouping_structure_blockwise_swap_coordinate}
\end{equation}
where \(\oplus\) denotes bitwise xor of integer indices. This permutation exchanges the two halves of length \(2^l\) in each consecutive block of length \(2^{l+1}\).
\end{definition}

\begin{proposition}[Block-wise half-swap invariance]
\label{prop:grouping_blockwise_swap_invariance}
Let \(\bm{p}^{(m)}\in V^{2^m}\) be nonzero. Then, for every
\(l\in\{0,\dots,m-1\}\), the words \(\bm{p}^{(m)}\) and
\(\mathsf{B}^{(m)}_l(\bm{p}^{(m)})\) produce the same control-bit
sequence and the same root symbol under Reduction. Consequently, they
produce the same group label.
\end{proposition}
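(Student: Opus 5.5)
The plan mirrors the proof of Proposition~\ref{prop:grouping_interleaved_shift_invariance}. I first prove a one-step compatibility lemma for $\mathsf{B}$, and then iterate it down the Reduction.

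Fix $r\ge 1$ and a nonzero $\bm{p}^{(r)}=[\bm{a};\bm{b}]$ with $\bm{a},\bm{b}\in V^{2^{r-1}}$. The one-step claim has two cases. If $l=r-1$, then $\mathsf{B}^{(r)}_{r-1}$ simply exchanges the halves, so $\mathsf{B}^{(r)}_{r-1}(\bm{p}^{(r)})=[\bm{b};\bm{a}]$. If $l\le r-2$, the index map $j\mapsto j\oplus 2^l$ does not touch bit $r-1$ of $j$, so $\mathsf{B}^{(r)}_l$ acts as $\mathsf{B}^{(r-1)}_l$ on each half separately. In that case $\mathsf{B}^{(r)}_l([\bm{a};\bm{b}])=[\mathsf{B}^{(r-1)}_l(\bm{a});\mathsf{B}^{(r-1)}_l(\bm{b})]$.

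I then check one Reduction step in each case.

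\emph{Case $l=r-1$.} The condition $\bm{a}=\bm{b}$ is symmetric, so the control bit is unchanged. If that bit is $1$, the reduced word is $\bm{a}=\bm{b}$ in both cases. If it is $0$, the reduced word is $\bm{a}+\bm{b}=\bm{b}+\bm{a}$ by commutativity of addition in $V$. So the swapped and unswapped words have identical reduced words at level $r-1$, and all subsequent control bits and the root coincide trivially.

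\emph{Case $l\le r-2$.} The map $\mathsf{B}^{(r-1)}_l$ is a bijection, so $\mathsf{B}^{(r-1)}_l(\bm{a})=\mathsf{B}^{(r-1)}_l(\bm{b})$ if and only if $\bm{a}=\bm{b}$, and the control bit is unchanged. The map is also a coordinate permutation, hence linear over $\mathbb F_2$. The reduced word is therefore $\mathsf{B}^{(r-1)}_l(\bm{a})$ or $\mathsf{B}^{(r-1)}_l(\bm{a})+\mathsf{B}^{(r-1)}_l(\bm{b})=\mathsf{B}^{(r-1)}_l(\bm{a}+\bm{b})$. In both subcases this equals $\mathsf{B}^{(r-1)}_l(\bm{p}^{(r-1)})$.

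Now I iterate from level $m$ downward. Each step with $r-1>l$ preserves the control bit and carries the relation ``reduced word of the transformed input $=\mathsf{B}_l$ applied to the reduced word of the original'' down one level. At level $r=l+1$ the first case applies. From that point on the two Reduction sequences coincide exactly, so every later control bit and the root symbol agree. Nonzeroness at every level, needed for the words to stay in the domain, is supplied by Proposition~\ref{prop:construction_nonzero_root}. The roots being equal, the seed symbols $\kappa(r)$ are equal for any admissible $\kappa$, so the group labels coincide. Unlike the cyclic-shift case, this needs no special property of $\kappa$.

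The main obstacle is bookkeeping rather than any deep step. I need to verify that the xor-based coordinate definition really splits as stated: $\mathsf{B}^{(r)}_{r-1}$ is the half exchange, and for $l<r-1$ the map $\mathsf{B}^{(r)}_l$ restricts blockwise to $\mathsf{B}^{(r-1)}_l$, because $j<2^{r-1}$ if and only if $j\oplus 2^l<2^{r-1}$. I also need to keep the induction indices aligned so that the hand-off to the first case happens exactly at level $l+1$.
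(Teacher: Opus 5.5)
Your proposal is correct and matches the paper's proof. It uses the same split into the case $l=r-1$ (half exchange, after which the reduced words coincide) and the case $l<r-1$ ($\mathsf{B}^{(r-1)}_l$ acting on each half, which commutes with one Reduction step because it is a bijective coordinate permutation); the paper packages the descent as an induction on $r$ rather than your explicit iteration down to level $l+1$.
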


\begin{proof}
We prove the statement by induction on \(r\). Let \(\bm{u}\in V^{2^r}\) be nonzero.

First consider \(r=1\). Then \(l=0\). Write
\begin{equation}
\bm{u}=[a;b],
\qquad
a,b\in V .
\end{equation}
The map \(\mathsf{B}^{(1)}_0\) gives
\begin{equation}
\mathsf{B}^{(1)}_0(\bm{u})=[b;a].
\end{equation}
The control bit is unchanged, because \(a=b\) if and only if \(b=a\). If this bit is \(1\), both words reduce to the same root \(a=b\). If this bit is \(0\), the two reduced roots are \(a+b\) and \(b+a\), which are equal. Hence the statement holds for \(r=1\).

Assume that the statement holds for length \(2^{r-1}\), and write
\begin{equation}
\bm{u}=[\bm{a};\bm{b}],
\qquad
\bm{a},\bm{b}\in V^{2^{r-1}} .
\end{equation}

If \(l=r-1\), then
\begin{equation}
\mathsf{B}^{(r)}_{r-1}(\bm{u})=[\bm{b};\bm{a}] .
\end{equation}
The top-level control bit is unchanged. If this bit is \(1\), both words reduce to \(\bm{a}=\bm{b}\). If this bit is \(0\), the first reduced words are \(\bm{a}+\bm{b}\) and \(\bm{b}+\bm{a}\), which are equal. Hence the remaining Reduction steps are identical.

If \(l<r-1\), then \(\mathsf{B}^{(r)}_l\) acts separately on the upper and lower halves:
\begin{equation}
\mathsf{B}^{(r)}_l([\bm{a};\bm{b}])
=
[\mathsf{B}^{(r-1)}_l(\bm{a});
 \mathsf{B}^{(r-1)}_l(\bm{b})] .
\end{equation}
Since \(\mathsf{B}^{(r-1)}_l\) is a coordinate permutation,
\begin{equation}
\bm{a}=\bm{b}
\quad\Longleftrightarrow\quad
\mathsf{B}^{(r-1)}_l(\bm{a})
=
\mathsf{B}^{(r-1)}_l(\bm{b}) .
\end{equation}
Thus the top-level control bit is unchanged.

If this bit is \(1\), then the first reduced words are \(\bm{a}\) and \(\mathsf{B}^{(r-1)}_l(\bm{a})\). Since \(\bm{u}\ne 0\) and \(\bm{a}=\bm{b}\), one has \(\bm{a}\ne 0\). The induction hypothesis applied to \(\bm{a}\) shows that the remaining Reduction outputs are the same.

If this bit is \(0\), then the first reduced words are \(\bm{a}+\bm{b}\) and
\begin{equation}
\mathsf{B}^{(r-1)}_l(\bm{a})
+
\mathsf{B}^{(r-1)}_l(\bm{b})
=
\mathsf{B}^{(r-1)}_l(\bm{a}+\bm{b}).
\end{equation}
Since \(\bm{a}\ne\bm{b}\), one has \(\bm{a}+\bm{b}\ne 0\). The induction hypothesis applied to \(\bm{a}+\bm{b}\) shows that the remaining Reduction outputs are the same.

Therefore, in all cases, \(\bm{u}\) and \(\mathsf{B}^{(r)}_l(\bm{u})\) produce the same control-bit sequence and the same root symbol. Taking \(r=m\) proves the proposition.
\end{proof}

Propositions~\ref{prop:grouping_interleaved_shift_invariance} and~\ref{prop:grouping_blockwise_swap_invariance} identify two structured transformations that preserve the group label. For \(n<N\), the statements are understood on zero-padded words. Pauli patterns generated by these transformations may therefore be merged into one term-group when the corresponding zero-padded words appear in the input Pauli list.

\subsubsection{Example}

As an example, consider \(n=8\), where no zero padding is needed and \(m=3\). Let \(\bm{p}\in V^8\) be the word corresponding to
\begin{equation}
P(\bm{p})=XYIZIIXI .
\end{equation}
Applying the block-wise half-swap \(\mathsf{B}^{(3)}_1\), which swaps the two length-two halves inside each length-four block, gives
\begin{equation}
P(\mathsf{B}^{(3)}_1(\bm{p}))=IZXYXIII .
\end{equation}
Applying \(\mathsf{B}^{(3)}_0\), which swaps the two entries inside each length-two block, gives
\begin{equation}
P(\mathsf{B}^{(3)}_0(\mathsf{B}^{(3)}_1(\bm{p})))=ZIYXIXII .
\end{equation}
Finally, applying \(\mathsf{C}\) four times gives
\begin{equation}
P(\mathsf{C}^4(\mathsf{B}^{(3)}_0(\mathsf{B}^{(3)}_1(\bm{p}))))=IIZIYXIX .
\end{equation}
The initial and final Pauli strings have disjoint supports:
\begin{equation}
\begin{split}
&\operatorname{supp}(XYIZIIXI)=\{0,1,3,6\},\\
&\operatorname{supp}(IIZIYXIX)=\{2,4,5,7\}.
\end{split}
\end{equation}
From Propositions~\ref{prop:grouping_interleaved_shift_invariance}
and~\ref{prop:grouping_blockwise_swap_invariance}, these transformations do not change the group label. Therefore, if \(XYIZIIXI\) and \(IIZIYXIX\) both appear in the input Pauli list, they are assigned to the same term-group although their supports are disjoint. Hence the corresponding Pauli evolutions can be placed in the same phase-rotation layer inside that term-group.

\subsection{Product-formula Implementation of the Grouped Construction}
\label{subsec:grouped_control_free_product_formula}

This subsection describes how the grouped control-free construction is used when the full Hamiltonian evolution is approximated by product formulas. The term-groups and the in-group order are fixed in a first-order ordered kernel. Higher-order formulas, when used, are obtained by recursively composing this fixed kernel and its reversed-order counterpart. No new layer assignment or reordering is applied after expanding the higher-order sequence.
In the following, we explain the details.

Let
\begin{equation}
H=
\sum_{g=0}^{G-1}H_g,
\qquad
H_g=\sum_{\nu=0}^{R_g-1}H_{g,\nu},
\label{eq:grouping_pf_full_decomposition}
\end{equation}
where \(H_g\) is the term-group Hamiltonian in Eq.~\eqref{eq:grouping_correctness_group_hamiltonian}. The refinement \(H_g=\sum_{\nu}H_{g,\nu}\) is a circuit-level decomposition chosen after the term-groups have been fixed, and is not an additional output of Algorithm~\ref{alg:construction_grouping}. Each \(H_{g,\nu}\) is either a single Pauli term or a commuting partial sum implemented as one elementary evolution, and \(R_g\) is the number of such components in the \(g\)th term-group. Let
\begin{equation}
M:=\sum_{g=0}^{G-1} R_g
\label{eq:grouping_pf_total_number}
\end{equation}
be the total number of elementary evolutions. After fixing a total order of these \(M\) components, denote them by \(A_0,A_1,\dots,A_{M-1}\). The first-order ordered kernel is
\begin{equation}
S_1(t):=\prod_{m=0}^{M-1} e^{-\mathrm{i}t A_m} = e^{-\mathrm{i}t A_0} e^{-\mathrm{i}t A_1} \cdots e^{-\mathrm{i}t A_{M-1}}.
\label{eq:grouping_pf_S1}
\end{equation}

The proposed construction is applied to the elementary factors in this kernel. If \(A_m\) belongs to term-group \(g\), then Theorem~\ref{thm:grouping_correctness} gives
\begin{equation}
K_g A_m K_g=-A_m,
\qquad
K_g e^{-\mathrm{i}t A_m}K_g=e^{\mathrm{i}t A_m}.
\label{eq:grouping_pf_factor_sign_flip}
\end{equation}
Thus each elementary evolution can change sign by controlling only \(K_g\), while the evolution factor \(e^{-\mathrm{i}t A_m}\) itself remains uncontrolled.

A higher-order product formula contains repeated occurrences of the elementary evolutions \(e^{-\mathrm{i}t A_m}\), with time steps determined by the recursive composition. Each occurrence inherits the group label and the sign-flip Pauli string assigned to the corresponding factor in \(S_1\). Adjacent identical sign-flip Pauli operations may be cancelled when they become adjacent in the ordered circuit. We do not assign new layers or reorder the fully expanded higher-order sequence, because moving noncommuting factors across recursive blocks would change the Lie--Trotter--Suzuki formula being implemented.

Consequently, the ordering data used by the construction are fixed at the level of \(S_1\). Higher-order time-symmetric formulas are obtained from this fixed kernel and its reversed-order counterpart, as described in Appendix~A. Every occurrence of \(A_m\) in the recursively constructed formula keeps the term-group label assigned in the first-order kernel, and no additional ordering is introduced at the higher-order level.

\subsection{Bounds on Term-groups, Generated Weights, and Preprocessing Cost}
\label{subsec:grouping_complexity}

We now summarize three parameters of the construction: the number of term-groups, the weight of the generated sign-flip Pauli strings, and the classical preprocessing cost. The relevant preprocessing consists of the Reduction stage applied to each input Pauli term and the Generation stage applied to each distinct group label.

First, we state the upper bound on the number of term-groups.

\begin{proposition}[Upper bound on the number of term-groups]
\label{prop:grouping_group_count}
Let $m=\lceil \log_2 n\rceil$ and $N=2^m$. Then the number of term-groups produced by Algorithm~\ref{alg:construction_grouping} satisfies
\begin{equation}
G \le \min\{L,\ |\operatorname{Im}(\kappa)|\,2^m\}.
\label{eq:number_groups_upper_bound_general}
\end{equation}
For the concrete map in Eq.~\eqref{eq:construction_kappa_choice},
\begin{equation}
|\operatorname{Im}(\kappa)|=2,
\label{eq:number_groups_upper_bound_concrete_image}
\end{equation}
and therefore
\begin{equation}
G \le \min\{L,2N\},
\qquad
N=2^{\lceil\log_2 n\rceil}.
\label{eq:number_groups_upper_bound_concrete}
\end{equation}
Since \(n\le N<2n\), this bound is linear in \(n\).
\end{proposition}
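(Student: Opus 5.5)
The bound is a counting argument on the set \(\Sigma\) of distinct group labels. By Eq.~\eqref{eq:construction_number_of_groups}, \(G=|\Sigma|\), and the plan is to bound \(|\Sigma|\) in two independent ways and take the minimum.

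First, \(\Sigma=\{\sigma_\ell\mid 0\le\ell\le L-1\}\) is the image of the index set \(\{0,\dots,L-1\}\) under \(\ell\mapsto\sigma_\ell\). A set of this form has at most \(L\) elements, so \(G\le L\). Equivalently, the \(\mathcal I_g\) are nonempty and pairwise disjoint by Eqs.~\eqref{eq:problem_group_cover} and~\eqref{eq:problem_group_disjoint}, and they partition a set of size \(L\).

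Second, by Definition~\ref{def:construction_group_label}, every label \(\sigma(\bm{p}^{(m)})=(s,\bm{c})\) lies in \(\operatorname{Im}(\kappa)\times\mathbb F_2^m\). Here the seed \(s=\kappa(r)\) is well defined because Proposition~\ref{prop:construction_nonzero_root} guarantees \(r\neq\bm{0}\). So \(\Sigma\subseteq\operatorname{Im}(\kappa)\times\mathbb F_2^m\), and hence \(G\le|\operatorname{Im}(\kappa)|\,2^m\). Combining the two bounds gives Eq.~\eqref{eq:number_groups_upper_bound_general}.

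For the concrete map in Eq.~\eqref{eq:construction_kappa_choice}, the image is \(\{(1,1),(1,0)\}\), i.e.\ \(\{Y,X\}\), which has two distinct elements. This gives Eq.~\eqref{eq:number_groups_upper_bound_concrete_image}, and substituting \(2^m=N\) yields Eq.~\eqref{eq:number_groups_upper_bound_concrete}.

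Finally, for the linearity remark, take \(m=\lceil\log_2 n\rceil\). Then \(\log_2 n\le m<\log_2 n+1\), so \(n\le N<2n\) and \(2N<4n\). The case \(n=1\), where \(m=0\) and \(N=1\), is consistent with this.

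There is no real obstacle. The only point needing care is that labels are defined only for nonzero inputs, which is exactly what the restriction to non-identity terms and Proposition~\ref{prop:construction_nonzero_root} provide.
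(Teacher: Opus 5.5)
Your proposal is correct and takes the paper's own approach: the paper also bounds $G=|\Sigma|$ by observing that labels lie in $\operatorname{Im}(\kappa)\times\mathbb F_2^m$ and that at most $L$ labels can occur among $L$ terms, then reads off the concrete case from Eq.~\eqref{eq:construction_kappa_choice}. Your extra remarks, that the seed is well defined by Proposition~\ref{prop:construction_nonzero_root} and that $n\le N<2n$, are accurate and spell out details the paper leaves implicit.
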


\begin{proof}
A group label consists of a symbol in $\operatorname{Im}(\kappa)$ and an $m$-bit string. Hence the total number of formal group labels is at most $|\operatorname{Im}(\kappa)|2^m$. Since at most $L$ of them can actually appear among the $L$ input terms, Eq.~\eqref{eq:number_groups_upper_bound_general} follows. The remaining statements are immediate from Eq.~\eqref{eq:construction_kappa_choice} and the definition of $N$.
\end{proof}

The weight of $K_g$ is relevant to the cost of the controlled Pauli operations that remain in the circuit. Therefore, we next give a proposition about the weight.

\begin{proposition}[Weight of a generated sign-flip Pauli string]
\label{prop:generated_weight}
Let $\sigma=(s,\bm{c})$ be a group label, where $s\in V\setminus\{0\}$ and $\bm{c}=(c_0,\ldots,c_{m-1})\in\mathbb F_2^m$. Define
\begin{equation}
z(\bm{c}):=\left|\{j\in\{0,\ldots,m-1\}\mid c_j=0\}\right|.
\label{eq:zero_count_control_bits}
\end{equation}
Let $\bm{k}^{(m)}$ be the padded word generated from $\sigma$ by Definition~\ref{def:generation_general_pauli}. Then
\begin{equation}
\mathrm{wt}(\bm{k}^{(m)})=2^{z(\bm{c})},
\label{eq:padded_generated_weight}
\end{equation}
where $\mathrm{wt}$ denotes the number of nonzero local $V$-symbols. After truncation to the first $n$ qubits, the resulting sign-flip Pauli string satisfies
\begin{equation}
\mathrm{wt}(K(\sigma))\leq \min\{n,2^{z(\bm{c})}\}.
\label{eq:truncated_generated_weight}
\end{equation}
\end{proposition}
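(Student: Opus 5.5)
The plan is to prove Eq.~\eqref{eq:padded_generated_weight} by induction on the Generation level $j$, tracking the weight of $\bm{k}^{(j)}$ step by step. The claim at level $j$ is
\begin{equation*}
\mathrm{wt}(\bm{k}^{(j)})=2^{z_j},
\qquad
z_j:=\left|\{i\in\{0,\dots,j-1\}\mid c_i=0\}\right| .
\end{equation*}
With this notation $z_m=z(\bm{c})$.

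For the base case $j=0$, we have $\bm{k}^{(0)}=s$. Since $s\in V\setminus\{\bm{0}\}$, this gives $\mathrm{wt}(\bm{k}^{(0)})=1=2^{0}$. For the concrete $\kappa$ this holds automatically, because $\operatorname{Im}(\kappa)$ excludes $\bm{0}$ by Definition~\ref{def:construction_group_label}.

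For the inductive step, the weight is additive over concatenation: $\mathrm{wt}([\bm{a};\bm{b}])=\mathrm{wt}(\bm{a})+\mathrm{wt}(\bm{b})$, since the support of a concatenation is the disjoint union of the two supports (the second one shifted by $2^j$). Using the Generation rule in Eq.~\eqref{eq:construction_generation_rule}, there are two cases:
\begin{itemize}
\item If $c_j=1$, then $\bm{k}^{(j+1)}=[\bm{k}^{(j)};\bm{0}_{2^j}]$. The weight is unchanged, and $z_{j+1}=z_j$.
\item If $c_j=0$, then $\bm{k}^{(j+1)}=[\bm{k}^{(j)};\bm{k}^{(j)}]$. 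The weight doubles, and $z_{j+1}=z_j+1$.
\end{itemize}
In both cases $\mathrm{wt}(\bm{k}^{(j+1)})=2^{z_{j+1}}$. Taking $j=m$ gives Eq.~\eqref{eq:padded_generated_weight}.

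For the truncated bound in Eq.~\eqref{eq:truncated_generated_weight}, observe that $K(\sigma)=P(\bm{k}(\sigma))$, and by Eq.~\eqref{eq:problem_support_definition} its weight is the number of nonzero symbols among the first $n$ entries of $\bm{k}^{(m)}$. This number is at most the number of nonzero symbols in the whole word, which is $2^{z(\bm{c})}$. It is also at most $n$, because there are only $n$ positions. Together these give the minimum.

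No step is a genuine obstacle. The only point needing care is that truncation can strictly decrease the weight when $n<N$, because some nonzero copies of $s$ may lie in the padded tail. For that reason only an inequality is claimed after truncation.
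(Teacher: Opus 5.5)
Your proposal is correct and follows essentially the same argument as the paper. The seed $s\neq\bm{0}$ has weight one, the weight is unchanged when $c_j=1$ and doubles when $c_j=0$, and truncation to the first $n$ positions can only remove nonzero entries while leaving at most $n$ of them; your explicit induction hypothesis $\mathrm{wt}(\bm{k}^{(j)})=2^{z_j}$, together with additivity of weight under concatenation, simply spells out the paper's step-counting in more detail.
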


\begin{proof}
The initial word $\bm{k}^{(0)}=s$ has weight one. At the $j$-th Generation step, if $c_j=1$, then $\bm{k}^{(j+1)}=[\bm{k}^{(j)};\bm{0}_{2^j}]$, and the weight is unchanged. If $c_j=0$, then $\bm{k}^{(j+1)}=[\bm{k}^{(j)};\bm{k}^{(j)}]$, and the weight is doubled. This proves Eq.~\eqref{eq:padded_generated_weight}. Truncation can only remove nonzero entries, and the truncated word has length $n$. Therefore Eq.~\eqref{eq:truncated_generated_weight} follows.
\end{proof}

Finally, the classical preprocessing complexity to calculate the term-group labels is summarized as follows.
\begin{proposition}[Classical preprocessing complexity]
\label{prop:grouping_complexity}
Let $N=2^{\lceil \log_2 n\rceil}$. For one input Pauli term, the Reduction stage costs $O(N)$ elementary bit operations. For one distinct group label, the Generation stage also costs $O(N)$. Consequently, the total cost of constructing the term-group labels for all input terms and the sign-flip Pauli strings for all resulting groups is
\begin{equation}
O(LN+GN).
\label{eq:grouping_complexity_total}
\end{equation}
\end{proposition}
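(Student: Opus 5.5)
The proof is a direct operation count that follows Definitions~\ref{def:reduction_general_pauli} and~\ref{def:generation_general_pauli} level by level. Each symbol of \(V\) is represented by two bits, so one comparison or one addition of two symbols in \(V\) costs \(O(1)\) bit operations.

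\textbf{Reduction.} I first charge the padding of \(\bm{p}_\ell\) to length \(N\), which costs \(O(N)\). At level \(j\) (from \(m-1\) down to \(0\)) the current word \(\bm{p}^{(j+1)}\) has length \(2^{j+1}\). The cost at this level splits into three parts.
\begin{itemize}
\item Splitting into \(\bm{a}^{(j)},\bm{b}^{(j)}\) is free: it is only an index offset.
\item Deciding \(c_j\) takes at most \(2^j\) symbol comparisons.
\item Forming \(\bm{p}^{(j)}\) takes at most \(2^j\) symbol additions when \(c_j=0\), or a copy or reuse of \(\bm{a}^{(j)}\) when \(c_j=1\).
\end{itemize}
Hence level \(j\) costs \(O(2^{j})\). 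Summing over levels gives the geometric series
\[
\sum_{j=0}^{m-1}O(2^{j})=O(2^m)=O(N).
\]
The final step, evaluating \(\kappa\) on the root symbol, is a table lookup costing \(O(1)\), so one term costs \(O(N)\) in total.

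\textbf{Generation.} This is symmetric. Starting from \(\bm{k}^{(0)}=s\), step \(j\) writes a word of length \(2^{j+1}\): either a copy of \(\bm{k}^{(j)}\) followed by \(2^j\) zeros, or two copies of \(\bm{k}^{(j)}\). Either way it costs \(O(2^{j+1})\) symbol writes. Summing over \(j\) gives \(O(N)\), and truncating to the first \(n\) entries adds at most \(O(N)\).

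\textbf{Total.} Reduction runs on each of the \(L\) terms, contributing \(O(LN)\). Generation runs once per distinct label, contributing \(O(GN)\), which gives Eq.~\eqref{eq:grouping_complexity_total}.

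The one step that needs care is forming the partition. Collecting the labels into \(\Sigma\) and assigning \(\gamma\) must not exceed this bound. Each label consists of a seed symbol and an \(m\)-bit string \(\bm{c}\in\mathbb F_2^m\). I will encode it as an integer below \(|\operatorname{Im}(\kappa)|\,2^m=O(N)\), which by Proposition~\ref{prop:grouping_group_count} is the number of possible labels. I then bucket the terms with a direct-address array of size \(O(N)\). Initializing the array costs \(O(N)\), and inserting each term costs \(O(m)\). Both are absorbed into \(O(LN)\), since \(L\ge 1\) and \(m\le N\).
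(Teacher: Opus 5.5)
Your proof is correct and follows essentially the same route as the paper: a level-by-level count whose per-level costs form a geometric series bounded by $O(N)$ for Reduction on each of the $L$ terms and for Generation on each of the $G$ labels, giving $O(LN+GN)$. Your extra bucketing step, which builds $\Sigma$ and $\gamma$ within $O(LN)$, covers a detail the paper leaves implicit, but it does not change the argument.
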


\begin{proof}
At each reduction level, the word length is halved. The total work per input term is proportional to
\begin{equation}
N+\frac{N}{2}+\frac{N}{4}+\cdots < 2N.
\label{eq:grouping_complexity_geometric_sum}
\end{equation}
The same bound holds for Generation on one distinct group label. Summing over all $L$ input terms and all $G$ distinct group labels gives Eq.~\eqref{eq:grouping_complexity_total}.
\end{proof}

\section{Numerical Experiments}
\label{sec:numerical}

This section evaluates the proposed construction at the level of the signed controlled time-evolution circuit in Eq.~\eqref{eq:intro_signed_controlled_evolution}. 

\subsection{Compared Constructions, Metrics, and Compilation Settings}
\label{subsec:numerical_scenarios}

In the numerical experiments, we assume that the Hamiltonian-evolution
block is constructed with the first-order ordered product formula. Thus each Pauli term
appears once as an elementary Pauli evolution in the ordered kernel.

Under the setting above, we compare three constructions. The first construction, denoted by $C_{\mathrm{direct}}$, controls every elementary gate in the time-evolution circuit by the ancilla qubit. This is the most direct implementation of the signed controlled time evolution.

The second construction, denoted by $C_{\mathrm{signed\mbox{-}Pauli}}$, implements each elementary signed Pauli evolution directly as a Pauli rotation on the ancilla-extended Pauli string. For a Pauli term $P_\ell$, it uses
\begin{equation}
\ketbra{0}{0}_{\mathcal A}\otimes e^{\mathrm{i}\theta P_\ell}
+
\ketbra{1}{1}_{\mathcal A}\otimes e^{-\mathrm{i}\theta P_\ell}
=
\exp\!\left(\mathrm{i}\theta Z_{\mathcal A}\otimes P_\ell\right).
\label{eq:numerical_signed_pauli_baseline}
\end{equation}
This construction is stronger than controlling every elementary gate, because it treats the signed controlled Pauli evolution as one Pauli rotation. Thus it replaces each signed controlled Pauli evolution by a Pauli rotation generated by the \((n+1)\)-qubit Pauli string \(Z_{\mathcal A}\otimes P_\ell\). The phase rotation in Eq.~\eqref{eq:numerical_signed_pauli_baseline} nevertheless always contains the ancilla qubit. Hence, even if two system Pauli terms have disjoint supports, their ancilla-extended generators overlap at the ancilla. In the support-disjoint layer model used below, the elementary phase rotations of $C_{\mathrm{signed\mbox{-}Pauli}}$ therefore remain serial at the level of phase-rotation layers.

The third construction is the proposed grouped control-free construction. It first applies the recursive grouping algorithm from Sections~\ref{sec:construction} and~\ref{sec:grouping}. For each term-group, the generated anti-commuting Pauli string is applied under ancilla control, while the Pauli evolutions inside the grouped block are not ancilla-controlled. Therefore, if two in-group Pauli terms have disjoint system supports, their elementary phase rotations can be assigned to the same phase-rotation layer. This is the main structural difference between $C_{\mathrm{signed\mbox{-}Pauli}}$ and the proposed construction. It is also the main reason why the proposed construction can reduce the compiled $T$ depth after Clifford+$T$ synthesis of the phase rotations.

After the elementary Pauli-evolution order is fixed, $C_{\mathrm{direct}}$, $C_{\mathrm{signed\mbox{-}Pauli}}$, and the proposed construction place the elementary Pauli-evolution factors according to that same order. For the proposed construction, the phase-rotation layer count is evaluated from support-disjoint layers inside each term-group. Each counted layer contains terms whose system supports are pairwise disjoint.

We report three metrics. The first is compiled $T$ depth. This metric is the main non-Clifford cost metric in the present experiments, motivated by the cost of magic-state-based fault-tolerant implementations of non-Clifford gates~\cite{bravyi2005Universal,fowler2012Surface,gidney2019Efficient}. The second is compiled CX depth, which measures two-qubit entangling-gate depth after transpilation. The third is the phase-rotation layer count. This count measures how much the elementary Pauli phase rotations can be parallelized before Clifford+$T$ synthesis. Under a fixed synthesis tolerance, each arbitrary-angle $R_z$ rotation contributes a comparable amount of non-Clifford depth after Clifford+$T$ synthesis~\cite{kliuchnikov2013Fast,ross2016Optimal,amy2014Polynomial}. Hence the number of serial phase-rotation layers is a useful proxy for compiled $T$ depth. Routing and basis-change gates can affect the final compiled depth. For $C_{\mathrm{signed\mbox{-}Pauli}}$, this count is the number of distinct non-identity Pauli terms, because every phase rotation contains the ancilla. For the proposed construction, this count is the total number of support-disjoint layers inside all term-groups.

Each elementary system Pauli evolution $e^{-\mathrm{i}\theta P}$ is synthesized by basis changes, a CNOT parity-computation chain, one $R_z(2\theta)$ rotation, and uncomputation. The ancilla-extended rotation in Eq.~\eqref{eq:numerical_signed_pauli_baseline} is synthesized by the same rule applied to the Pauli string $Z_{\mathcal A}\otimes P$. All circuits were generated and transpiled with Qiskit 2.3.1~\cite{javadiabhari2024Quantum} using the Clifford+$T$ basis $\{h,s,sdg,x,z,t,tdg,cx\}$, optimization level $1$, and unitary-synthesis tolerance $10^{-8}$. We use two connectivity models: full connectivity and a bidirectional nearest-neighbor two-dimensional grid. The same transpilation settings are used in both connectivity models.

\subsection{Random Hamiltonians}
\label{subsec:numerical_random}

\subsubsection{Simulation Conditions}
\label{subsubsec:numerical_random_conditions}

The first benchmark uses random Pauli-sum Hamiltonians. For each system size $n=4,\dots,16$ and locality bound $k\in\{2,4\}$, we generated $50$ instances. Each instance starts from $8n$ raw Pauli terms. The factor $8$ is an experimental choice used to keep the number of distinct Pauli terms linear in $n$ while producing enough support overlap for grouping effects to be visible. For each raw term, the locality is sampled uniformly from $\{1,\dots,k\}$, the support is sampled uniformly from all possible combinations with the selected locality, each local non-identity Pauli operator is sampled uniformly from $\{X,Y,Z\}$, and the coefficient is sampled uniformly from $[-1,1]$. Duplicate Pauli strings are aggregated, and the coefficient vector is normalized by its $\ell_1$ norm. The same instance set is used for all three constructions.

\subsubsection{Simulation Results}
\label{subsubsec:numerical_random_results}

\begin{figure*}[!t]
\centering
\subfloat[Random 2-local Hamiltonians.]{
\includegraphics[width=0.48\textwidth]{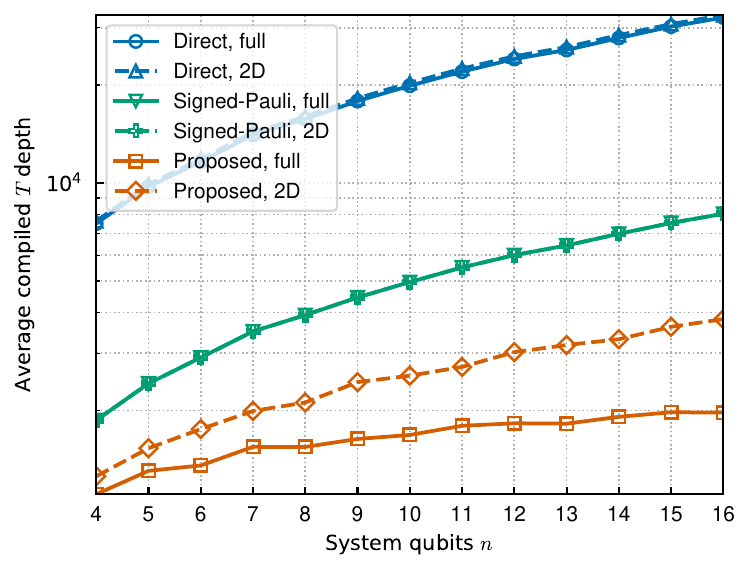}
\label{fig:random_locality2_tdepth}
}
\hfil
\subfloat[Random 4-local Hamiltonians.]{
\includegraphics[width=0.48\textwidth]{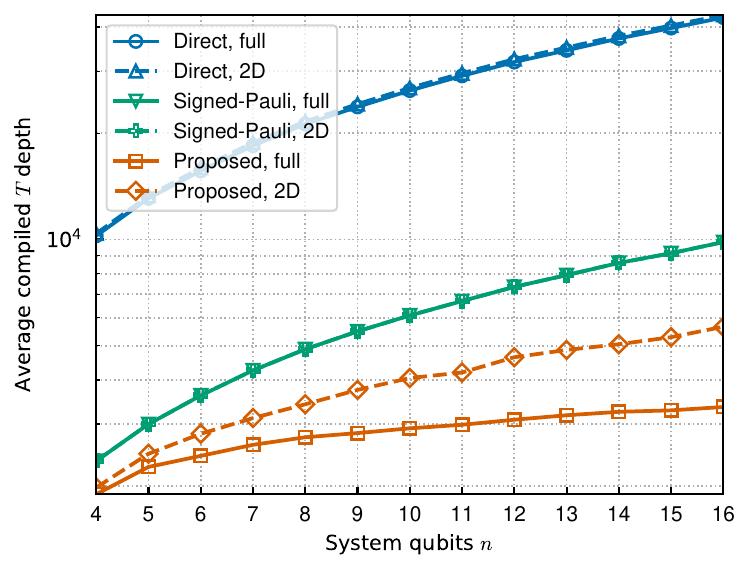}
\label{fig:random_locality4_tdepth}
}
\caption{Average compiled $T$ depth for random Hamiltonians. The direct construction controls every elementary gate in the controlled time-evolution circuit. The signed-Pauli construction implements each signed controlled Pauli evolution as an ancilla-extended Pauli rotation. The proposed construction removes ancilla control from the grouped Pauli-evolution blocks.}
\label{fig:random_tdepth}
\end{figure*}

Figs.~\ref{fig:random_tdepth}(a) and (b) show the average compiled $T$ depth for random 2-local and 4-local Hamiltonians, respectively. In both cases, $C_{\mathrm{signed\mbox{-}Pauli}}$ is already much shallower than $C_{\mathrm{direct}}$ in compiled $T$ depth. The proposed construction further reduces the compiled $T$ depth by exposing phase-rotation parallelism inside the grouped evolution blocks. At $n=16$ under full connectivity, the average compiled $T$ depth for 2-local Hamiltonians decreases from $8042.74$ in $C_{\mathrm{signed\mbox{-}Pauli}}$ to $1970.40$ in the proposed construction, giving a $4.08$-fold reduction. For 4-local Hamiltonians at $n=16$ under full connectivity, it decreases from $9838.06$ in $C_{\mathrm{signed\mbox{-}Pauli}}$ to $3353.34$ in the proposed construction, giving a $2.93$-fold reduction. Under the two-dimensional grid, routing overhead reduces the gap, but the same trend remains.

\begin{figure*}[!t]
\centering
\subfloat[Random 2-local Hamiltonians.]{
\includegraphics[width=0.48\textwidth]{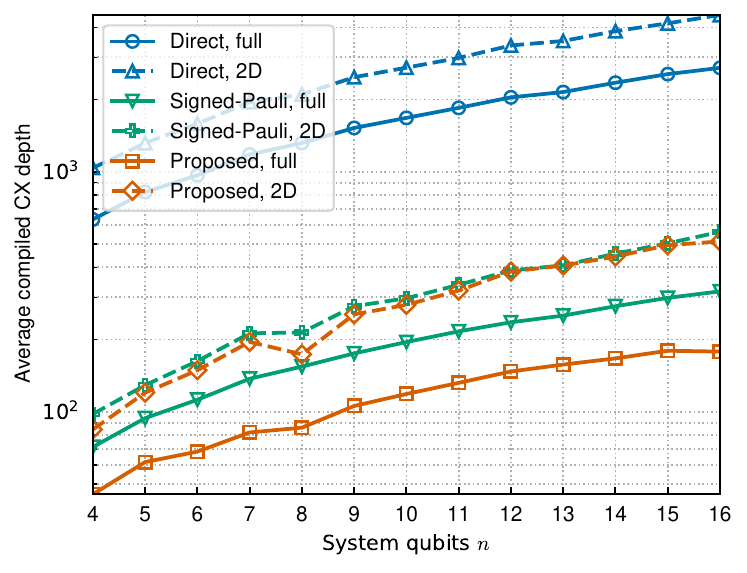}
\label{fig:random_locality2_cxdepth}
}
\hfil
\subfloat[Random 4-local Hamiltonians.]{
\includegraphics[width=0.48\textwidth]{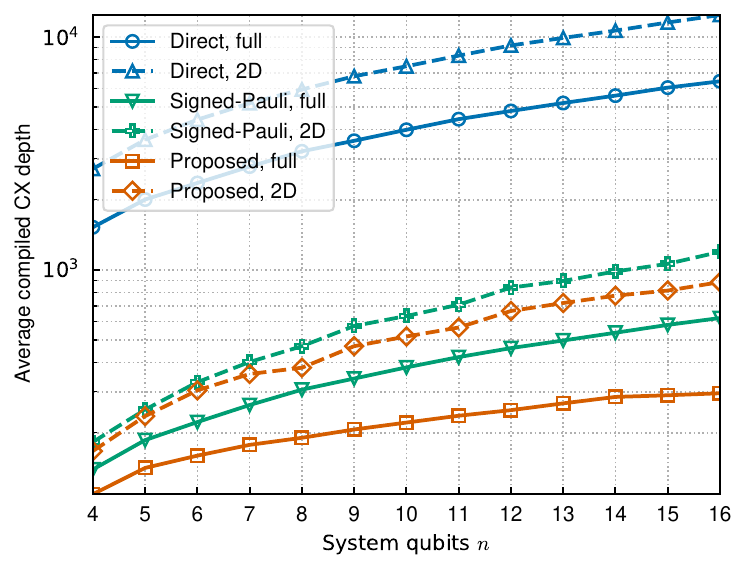}
\label{fig:random_locality4_cxdepth}
}
\caption{Average compiled CX depth for random Hamiltonians under the same settings as in Fig.~\ref{fig:random_tdepth}.}
\label{fig:random_cxdepth}
\end{figure*}

Figs.~\ref{fig:random_cxdepth}(a) and~(b) show the corresponding compiled CX depth for random 2-local and 4-local Hamiltonians, respectively. At $n=16$ under full connectivity, the average compiled CX depth for 2-local Hamiltonians decreases from $317.52$ in $C_{\mathrm{signed\mbox{-}Pauli}}$ to $178.10$ in the proposed construction, giving a $1.78$-fold reduction. For 4-local Hamiltonians at $n=16$ under full connectivity, it decreases from $621.48$ in $C_{\mathrm{signed\mbox{-}Pauli}}$ to $295.48$ in the proposed construction, giving a $2.10$-fold reduction. Under the grid topology, the CX-depth gap is smaller because routing overhead is present for all constructions. Nevertheless, the proposed construction has equal or smaller compiled CX depth than $C_{\mathrm{signed\mbox{-}Pauli}}$ in the tested instances.

\begin{figure*}[!t]
\centering
\subfloat[Random 2-local Hamiltonians.]{
\includegraphics[width=0.48\textwidth]{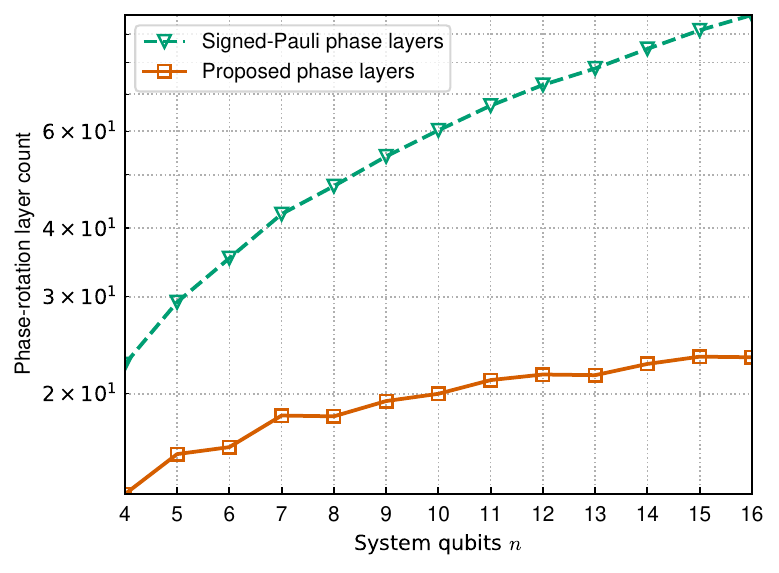}
\label{fig:random_locality2_layers}
}
\hfil
\subfloat[Random 4-local Hamiltonians.]{
\includegraphics[width=0.48\textwidth]{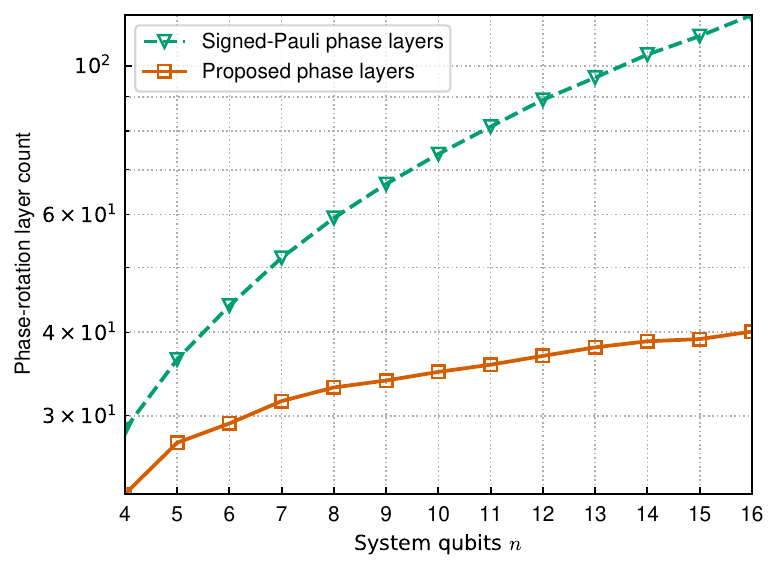}
\label{fig:random_locality4_layers}
}
\caption{Phase-rotation layer count for random Hamiltonians. The signed-Pauli construction has one phase-rotation layer per distinct Pauli term because each phase rotation contains the ancilla. The proposed construction can place support-disjoint in-group Pauli evolutions in the same layer.}
\label{fig:random_layers}
\end{figure*}

Figs.~\ref{fig:random_layers}(a) and~(b) show the phase-rotation layer counts for random 2-local and 4-local Hamiltonians, respectively. These results explain the $T$-depth reduction observed in Figs.~\ref{fig:random_tdepth}(a) and~(b). For 2-local Hamiltonians at $n=16$, the signed-Pauli phase-rotation layer count is $97.52$ on average, while the proposed phase-rotation layer count is $23.30$, giving a $4.19$-fold reduction. For 4-local Hamiltonians at $n=16$, the signed-Pauli phase-rotation layer count is $119.32$ on average, while the proposed phase-rotation layer count is $40.10$, giving a $2.98$-fold reduction. The smaller reduction factor for 4-local Hamiltonians is consistent with the larger support size of each Pauli term. Higher-locality terms overlap on system qubits more often, and fewer pairs of in-group rotations can occupy the same logical phase-rotation layer.

\subsection{Structured Spin Hamiltonians}
\label{subsec:numerical_kagome}

The second benchmark uses a structured spin Hamiltonian on the Kagome lattice. Kagome spin systems are standard examples of geometrically frustrated magnets, and chiral spin liquid phases have been studied in Kagome Mott-insulator models with time-reversal symmetry breaking~\cite{bauer2014Chiral,kalmeyer1987Equivalence,balents2010Spin}. In particular, we use the effective spin model considered by Bauer \textit{et al.}~\cite{bauer2014Chiral} as a structured Pauli-list benchmark.

The Hamiltonian is
\begin{equation}
H_{\rm K}
=
J_{\rm HB}\sum_{\langle i,j\rangle}\mathbf{S}_i\cdot\mathbf{S}_j
+
h_z\sum_i S_i^z
+
J_\chi\sum_{(i,j,k)\in \triangle}\mathbf{S}_i\cdot(\mathbf{S}_j\times\mathbf{S}_k).
\label{eq:kagome_hamiltonian}
\end{equation}
Here, \(\langle i,j\rangle\) denotes a nearest-neighbor edge of the periodic Kagome lattice. The symbol \(\triangle\) denotes the set of elementary triangles formed by three mutually nearest-neighbor spin sites of the lattice. For each \((i,j,k)\in\triangle\), the three sites are ordered clockwise. The geometric convention is specified in Appendix~B.

Following the parametrization of Bauer \textit{et al.}, we set
\begin{equation}
J_{\rm HB}=J\cos\theta,
\qquad
J_\chi=J\sin\theta,
\qquad
J=1.
\label{eq:kagome_parametrization}
\end{equation}
The benchmark parameters are
\begin{equation}
\theta=\frac{\pi}{4},
\qquad
h_z=0.1,
\qquad
t=1.
\label{eq:kagome_parameters}
\end{equation}
Thus $J_{\rm HB}=J_\chi=1/\sqrt{2}$. The choice \(\theta=\pi/4\) gives a balanced Heisenberg and scalar-chirality benchmark. The value \(t=1\) fixes the circuit scale used in all compared constructions. The unnormalized coefficients are generated from Eqs.~\eqref{eq:kagome_hamiltonian}--\eqref{eq:kagome_parameters} and the coefficient vector is then normalized by its $\ell_1$ norm before circuit synthesis.

For the compiled-depth benchmark, we use a periodic $L_x\times L_y=4\times 2$ Kagome lattice with three spins per unit cell. This gives $n=3L_xL_y=24$. This benchmark tests the complete circuit cost on one moderate-size structured instance.

For the scaling benchmark, we use the sequence
\begin{equation}
\begin{split}
(L_x,L_y,n)
=&(2,2,12),(3,2,18),\\
&(4,2,24),(4,3,36),(4,4,48),\\
&(6,4,72),(8,4,96),(8,6,144),(10,6,180).
\label{eq:kagome_scaling_sequence}
\end{split}
\end{equation}
This scaling benchmark has a different purpose from the compiled-depth benchmark. It tests whether the phase-rotation parallelism exposed by the grouping persists as the Kagome lattice grows. For these larger instances, we do not compile the full Clifford+$T$ circuits due to limitations in our computational resources. Instead, we report the phase-rotation layer count, because this can be used as an indicator for estimating the $T$ depth under full connectivity.

\begin{figure}[t]
\centering
\includegraphics[width=\columnwidth]{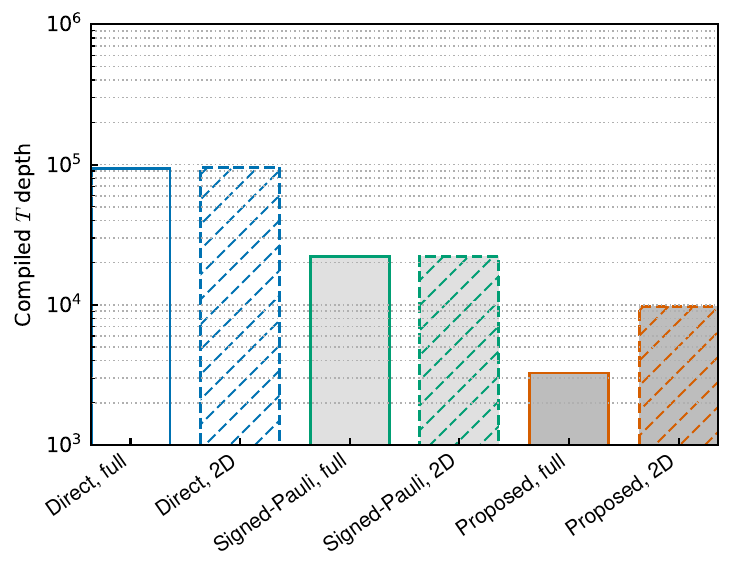}
\caption{Compiled $T$ depth for the structured spin Hamiltonian on the $4\times2$ Kagome lattice, with $n=24$ spins.}
\label{fig:kagome_n24_tdepth}
\end{figure}

\begin{figure}[t]
\centering
\includegraphics[width=\columnwidth]{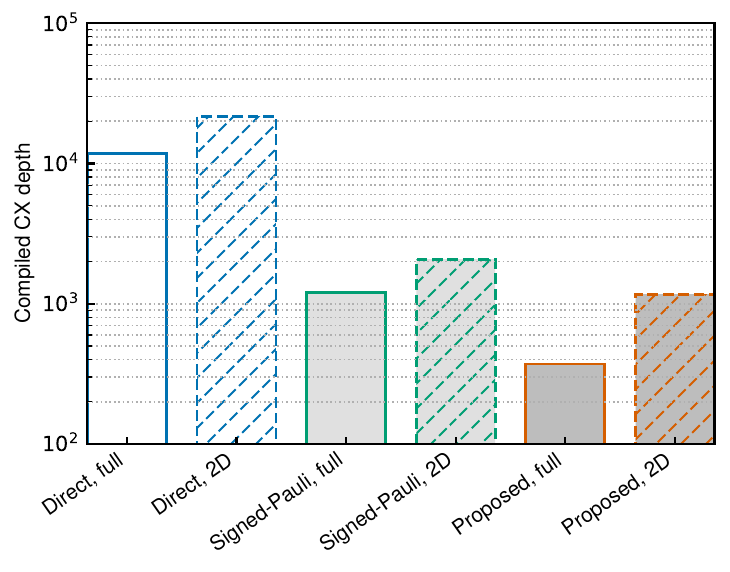}
\caption{Compiled CX depth for the structured spin Hamiltonian under the same settings as in Fig.~\ref{fig:kagome_n24_tdepth}.}
\label{fig:kagome_n24_cxdepth}
\end{figure}

\begin{figure}[t]
\centering
\includegraphics[width=\columnwidth]{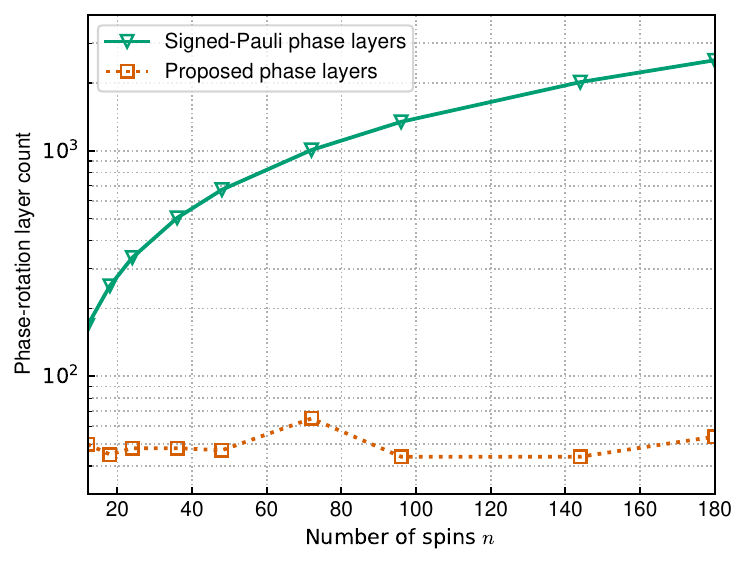}
\caption{Phase-rotation layer count for the structured spin Hamiltonian as a function of the number of spins. The signed-Pauli layer count equals the number of distinct non-identity Pauli terms, while the proposed layer count is obtained from the recursive grouping and support-disjoint in-group layer count.}
\label{fig:kagome_scaling_layers}
\end{figure}

Figs.~\ref{fig:kagome_n24_tdepth} and~\ref{fig:kagome_n24_cxdepth} show the compiled depths for the $n=24$ Kagome instance. Under full connectivity, the compiled $T$ depth decreases from $22104$ in $C_{\mathrm{signed\mbox{-}Pauli}}$ to $3267$ in the proposed construction, giving a $6.77$-fold reduction relative to $C_{\mathrm{signed\mbox{-}Pauli}}$. The direct gatewise construction has compiled $T$ depth $93503$, and the proposed construction gives a $28.62$-fold reduction relative to the direct construction in compiled $T$ depth. Under the two-dimensional grid topology, the compiled $T$ depth decreases from $22104$ in $C_{\mathrm{signed\mbox{-}Pauli}}$ to $9701$ in the proposed construction, giving a $2.28$-fold reduction relative to $C_{\mathrm{signed\mbox{-}Pauli}}$. The corresponding direct gatewise value is $95504$, so the proposed construction gives a $9.84$-fold reduction relative to the direct construction.

The compiled CX-depth data show the same ordering. Under full connectivity, the compiled CX depth decreases from $1198$ in $C_{\mathrm{signed\mbox{-}Pauli}}$ to $372$ in the proposed construction, giving a $3.22$-fold reduction relative to $C_{\mathrm{signed\mbox{-}Pauli}}$. The direct gatewise construction has CX depth $11781$, so the proposed construction gives a $31.67$-fold reduction relative to the direct construction. Under the grid topology, the compiled CX depth decreases from $2074$ in $C_{\mathrm{signed\mbox{-}Pauli}}$ to $1169$ in the proposed construction, giving a $1.77$-fold reduction relative to $C_{\mathrm{signed\mbox{-}Pauli}}$. The direct gatewise construction has CX depth $21759$, so the proposed construction gives an $18.61$-fold reduction relative to the direct construction. Thus the proposed construction reduces both non-Clifford depth and entangling-gate depth in this structured benchmark. The reduction is larger under full connectivity, because the grid topology introduces routing constraints that partially limit the parallelism exposed by the grouped construction.

The phase-rotation layer count gives a structural explanation for the compiled $T$-depth behavior. In the compiled-depth run for the $n=24$ Kagome instance, $C_{\mathrm{signed\mbox{-}Pauli}}$ has $264$ phase-rotation layers because every elementary phase rotation contains the ancilla. The proposed construction has $39$ phase-rotation layers, giving a $6.77$-fold reduction in the layer count. This matches the compiled $T$-depth reduction factor under full connectivity. Therefore, in the fully connected setting, the main source of the compiled $T$-depth reduction is the removal of the ancilla from the grouped evolution blocks, which allows support-disjoint Pauli evolutions inside each term-group to share phase-rotation layers.

Fig.~\ref{fig:kagome_scaling_layers} shows the scaling of the phase-rotation layer count from $n=12$ to $n=180$. In the tested sequence, the number of distinct Pauli terms grows from $132$ at $n=12$ to $1980$ at $n=180$. The proposed phase-rotation layer count stays between $34$ and $55$. At $n=180$, the signed-Pauli construction has $1980$ phase-rotation layers, while the proposed construction has $44$ phase-rotation layers, giving a $45.00$-fold reduction.
Thus, for this sparse structured Hamiltonian, the signed-Pauli layer count grows with the number of Pauli terms, whereas the proposed grouping keeps the phase-rotation layer count nearly size-independent over the tested range. Consequently, the separation between the two constructions becomes larger as the number of qubits increases.

\subsection{Summary of This Section}
\label{subsec:numerical_summary}

The random and structured spin-Hamiltonian benchmarks show the same qualitative behavior. The direct construction has the largest compiled depths. The signed-Pauli construction removes much of the overhead caused by controlling every elementary gate while keeping the ancilla in every elementary phase rotation. The proposed construction removes the ancilla from the grouped Pauli-evolution blocks. This enables support-disjoint Pauli evolutions in the same term-group to share phase-rotation layers, and this phase-rotation layer reduction is reflected in the compiled \(T\) depth.

The influence of the connectivity model should be interpreted separately from this logical layer-count reduction. The phase-rotation layer count is computed before hardware routing. It measures whether elementary Pauli rotations can be placed in the same logical parallel step because their system supports are disjoint. Under full connectivity, this logical parallelism is more directly reflected in the compiled depth. Under the two-dimensional nearest-neighbor grid, a Pauli rotation whose support is not geometrically local requires additional routing operations, such as SWAP insertion and longer parity-computation paths. When several such nonlocal rotations are assigned to the same logical phase-rotation layer, their routed CNOT and SWAP networks can overlap on the physical grid and the logical parallelism can be partially serialized during transpilation. This explains why the proposed construction still reduces compiled \(T\) and CX depths under the grid topology, while the reduction factors are smaller than those under full connectivity. Nevertheless, even under the two-dimensional grid topology, the proposed construction gives the smallest compiled \(T\) depth among the three constructions in the tested benchmarks. Thus the grid topology reduces the relative reduction factor, but not the ordering of the compiled \(T\)-depth results.

\section{Application to Single-Branch Controlled Hamiltonian Evolution}
\label{sec:single_branch}

This section explains how the proposed construction applies to single-branch controlled Hamiltonian evolution. This primitive appears in Hamiltonian simulation by QSP, phase-estimation-type algorithms, GQSP, and Hadamard-test circuits~\cite{low2017Optimal,kitaev1995Quantum,motlagh2024Generalized,berry2024Doubling,cleve1998Quantum}. In these applications, reducing the depth of the controlled Hamiltonian-evolution block reduces the cost of each controlled-query call.

The single-branch controlled operator is
\begin{equation}
V_{H}(t)
=
\ketbra{0}{0}_{\mathcal A}\otimes I
+
\ketbra{1}{1}_{\mathcal A}\otimes e^{-\mathrm{i} t H}.
\label{eq:single_branch_operator}
\end{equation}
A direct implementation of Eq.~\eqref{eq:single_branch_operator} places ancilla control on the Hamiltonian-evolution circuit. For a product-formula realization of \(e^{-\mathrm{i}tH}\), this again introduces controlled elementary Pauli evolutions.

The reduction follows from the signed controlled primitive in Eq.~\eqref{eq:intro_signed_controlled_evolution}. Namely,
\begin{equation}
\begin{split}
\left(I_{\mathcal A}\otimes e^{-\mathrm{i}tH/2}\right)D_H(t/2)
&=
\ketbra{0}{0}_{\mathcal A}\otimes I
+
\ketbra{1}{1}_{\mathcal A}\otimes e^{-\mathrm{i}tH}\\
&=V_H(t).
\end{split}
\label{eq:single_branch_reduction_exact}
\end{equation}
Thus, once \(D_H(t/2)\) is implemented by the proposed grouped control-free construction, the single-branch primitive is obtained by appending an uncontrolled half-step evolution \(I_{\mathcal A}\otimes e^{-\mathrm{i}tH/2}\).

The depth advantage follows from the same mechanism as in the signed controlled case. The Hamiltonian-evolution block is not directly controlled by the ancilla, and the remaining ancilla controls are applied only to the assigned sign-flip Pauli strings. Hence the single-branch primitive can inherit the compiled-depth reduction obtained for the signed controlled primitive, while covering controlled-query blocks used in phase-estimation-type algorithms, generalized quantum signal processing, and Hadamard-test circuits~\cite{kitaev1995Quantum,motlagh2024Generalized,berry2024Doubling,cleve1998Quantum}.

\section{Conclusion}
\label{sec:conclusion}

We studied the construction of signed controlled time-evolution circuits for arbitrary Pauli-sum Hamiltonians. The proposed recursive algorithm partitions the non-identity Pauli terms into term-groups and assigns one Pauli string to each group. The assigned Pauli string anti-commutes with every Pauli term in its group, and therefore flips the sign of the corresponding partial Hamiltonian. This allows the sign of each grouped time-evolution block to be selected by controlling only the assigned Pauli string, while the time-evolution block itself remains uncontrolled.

The construction is formulated in the binary symplectic representation of Pauli strings. The Reduction step assigns a group label to each input Pauli term, and the Generation step maps each distinct label to a sign-flip Pauli string. We proved by a stage-wise symplectic invariant that the generated Pauli string anti-commutes with every Pauli term that has the corresponding group label. For the concrete label map used in this paper, the number of term-groups grows at most linearly with the number of system qubits, and the Reduction and Generation stages require linear preprocessing in the padded word length. We also derived an explicit weight bound for the generated sign-flip Pauli strings. This bound quantifies the number of system qubits touched by the remaining controlled Pauli operations. Fewer zero entries in the control-bit sequence give lower-weight sign-flip strings.

The numerical experiments compared the direct construction, the signed-Pauli construction, and the proposed grouped control-free construction. The proposed construction removes ancilla control from the grouped evolution blocks, allowing support-disjoint Pauli evolutions inside a term-group to share phase-rotation layers. The random-Hamiltonian and structured spin-Hamiltonian benchmarks show that this layer-count reduction is reflected in the compiled \(T\) depth. The experiments also show reductions by our proposed method in compiled CX depth, with a stronger dependence on the connectivity model.

Finally, we showed that the same term-group construction and sign-flip Pauli-string assignment can be used for single-branch controlled Hamiltonian evolution. This follows from a simple algebraic reduction from the signed controlled primitive and extends the construction to one-ancilla controlled-query blocks appearing in various quantum algorithms.

\section*{Appendix}
\subsection{Product-formula recursion and error scaling}
\label{app:product_formula_details}

This appendix explains the Lie--Trotter--Suzuki recursion used in Section~\ref{subsec:grouped_control_free_product_formula}. The product $\prod_{m=0}^{M-1}U_m$ is written in the order $U_0U_1\cdots U_{M-1}$. Let
\begin{equation}
S_1(t):=\prod_{m=0}^{M-1} e^{-\mathrm{i}t A_m}
\end{equation}
be the fixed first-order ordered kernel. Its reversed-order counterpart is
\begin{equation}
S_1^{\mathrm{rev}}(t):=e^{-\mathrm{i}t A_{M-1}}\cdots e^{-\mathrm{i}t A_0}.
\end{equation}
The second-order time-symmetric formula is
\begin{equation}
S_2(t):=S_1(t/2)\,S_1^{\mathrm{rev}}(t/2).
\end{equation}
For each integer $p\ge 2$, the $2p$-th-order Suzuki formula is defined recursively by
\begin{equation}
\begin{split}
S_{2p}(t)
&=
S_{2p-2}(u_pt)^2
S_{2p-2}\bigl((1-4u_p)t\bigr)
S_{2p-2}(u_pt)^2,\\
u_p&:=\frac{1}{4-4^{1/(2p-1)}}.
\end{split}
\end{equation}
For fixed $p$, the repeated formula satisfies the standard asymptotic scaling
\begin{equation}
e^{-\mathrm{i}tH}
=
\bigl[S_{2p}(t/r)\bigr]^r
+
O(t^{2p+1}/r^{2p})
\end{equation}
in the large-$r$ regime \cite{Trotter1959On,suzuki1990Fractal,childs2019Nearly,childs2021Theory}. Since the direct construction, the signed-Pauli construction, and the proposed construction use the same ordered kernel $S_1$ in this paper, this recursion gives the same product-formula order and the same leading product-formula error operator for all three circuit realizations.

\subsection{Geometry of the Structured Spin-Hamiltonian Benchmark}
\label{app:kagome_geometry}

This appendix specifies the Kagome lattice used in Section~\ref{subsec:numerical_kagome}. We follow the orientation convention of the Kagome chiral-spin model of Bauer \textit{et al.}~\cite{bauer2014Chiral}.

We consider an \(L_x\times L_y\) Kagome cluster with periodic boundary conditions. Each unit cell contains three spin sites. A site is labeled by
\begin{equation}
(x,y,s),\qquad
0\le x<L_x,\quad
0\le y<L_y,\quad
s\in\{0,1,2\}.
\label{eq:kagome_site_label}
\end{equation}
The number of spins is
\begin{equation}
n=3L_xL_y.
\label{eq:kagome_n_spins}
\end{equation}
The qubit index assigned to \((x,y,s)\) is
\begin{equation}
q(x,y,s)=3(yL_x+x)+s.
\label{eq:kagome_qubit_index}
\end{equation}

The geometric embedding is fixed by the primitive vectors
\begin{equation}
a_1=(1,0),\qquad
a_2=\left(\frac{1}{2},\frac{\sqrt{3}}{2}\right),
\label{eq:kagome_primitive_vectors}
\end{equation}
and the sublattice offsets
\begin{equation}
\delta_0=(0,0),\qquad
\delta_1=\left(\frac{1}{2},0\right),\qquad
\delta_2=\left(\frac{1}{4},\frac{\sqrt{3}}{4}\right).
\label{eq:kagome_offsets}
\end{equation}
Before imposing periodicity, the position of site \((x,y,s)\) is
\begin{equation}
r_{x,y,s}=x a_1+y a_2+\delta_s.
\label{eq:kagome_position}
\end{equation}
Periodic boundary conditions are imposed with period vectors \(L_xa_1\) and \(L_ya_2\). Nearest-neighbor edges are the pairs of sites whose minimum-image distance is \(1/2\). Elementary triangles are the three-site nearest-neighbor cliques of this periodic graph, and their orientation is fixed by clockwise ordering in the local geometric embedding.

For every cluster used in Section~\ref{subsec:numerical_kagome}, this construction gives the number of edges and triangles as 
\begin{equation}
N_{\rm edge}=6L_xL_y,\qquad
N_{\triangle}=2L_xL_y,
\label{eq:kagome_counts}
\end{equation}
respectively.
The Hamiltonian used in the benchmark is
\begin{equation}
H_{\rm K}
=
J_{\rm HB}\sum_{\langle i,j\rangle}\mathbf{S}_i\cdot\mathbf{S}_j
+
h_z\sum_i S_i^z
+
J_\chi\sum_{(i,j,k)\in \triangle}
\mathbf{S}_i\cdot(\mathbf{S}_j\times\mathbf{S}_k),
\label{eq:appendix_kagome_hamiltonian}
\end{equation}
where \(\mathbf{S}_i=(X_i,Y_i,Z_i)/2\). The three-spin summation uses the clockwise orientation specified above. The benchmark parameters are
\begin{equation}
\begin{split}
&J_{\rm HB}=J\cos\theta,\qquad
J_\chi=J\sin\theta,\qquad
J=1,\\
&\theta=\frac{\pi}{4},\qquad
h_z=0.1.
\label{eq:appendix_kagome_parameters}
\end{split}
\end{equation}
Thus \(J_{\rm HB}=J_\chi=1/\sqrt{2}\). For an oriented triangle \((i,j,k)\), the scalar spin chirality is expanded as
\begin{equation}
\begin{split}
\mathbf S_i\cdot(\mathbf S_j\times\mathbf S_k)
=
\frac{1}{8}\bigl(&X_iY_jZ_k+Y_iZ_jX_k+Z_iX_jY_k\\
&-X_iZ_jY_k-Z_iY_jX_k-Y_iX_jZ_k\bigr).
\end{split}
\label{eq:appendix_scalar_chirality_pauli_expansion}
\end{equation}
Each oriented triangle therefore contributes six three-qubit Pauli strings. Since \(N_{\rm edge}=6L_xL_y=2n\), the Heisenberg term contributes \(3N_{\rm edge}=6n\) two-qubit Pauli strings. Since \(N_\triangle=2L_xL_y=2n/3\), the scalar-chirality term contributes \(6N_\triangle=4n\) three-qubit Pauli strings. Together with the \(n\) Zeeman strings, the total number of non-identity Pauli terms is
\begin{equation}
L=11n
\label{eq:kagome_pauli_term_count}
\end{equation}
for the parameter choice in Eq.~\eqref{eq:appendix_kagome_parameters}. For the \(4\times2\) cluster used in the compiled-depth benchmark, \(n=24\) and \(L=264\).


\bibliographystyle{IEEEtran}
\bibliography{main}

\makeatletter
\@ifundefined{c@biography}{\newcounter{biography}}{}
\makeatother

\newpage
\begin{IEEEbiographynophoto}
{Shintaro~Fujiwara} (Graduate Student Member, IEEE) received the B.E. and M.E. degrees in Engineering Science from Yokohama National University, Kanagawa, Japan, in 2024 and 2025, respectively. He is currently pursuing the Ph.D. degree with the Graduate School of Engineering Science, Yokohama National University, Kanagawa, Japan, and is a Research Fellow (DC1) of the Japan Society for the Promotion of Science (JSPS). His research interests include quantum algorithms, coding theory, and wireless communications.
\end{IEEEbiographynophoto}

\begin{IEEEbiographynophoto}
{Naoki~Yamamoto} is a Professor at the Department of Applied Physics
and Physico-Informatics, Keio University, and the chair of Keio Quantum
Computing Center. He received the B.S. degree in engineering and the M.S.
and Ph.D. degrees in information physics and computing from the University
of Tokyo in 1999, 2001, and 2004, respectively. He was a postdoctoral fellow
at the California Institute of Technology from 2004 to 2007, and at the
Australian National University from 2007 to 2008. His research interests
include quantum computation and control.
\end{IEEEbiographynophoto}

\begin{IEEEbiographynophoto}
{Naoki~Ishikawa} (Senior Member, IEEE) is an Associate Professor with the Faculty of Engineering, Yokohama National University, Kanagawa, Japan. He received the B.E., M.E., and Ph.D. degrees in Electronic and Information Engineering from the Tokyo University of Agriculture and Technology, Tokyo, Japan, in 2014, 2015, and 2017, respectively. In 2015, he was an academic visitor with the School of Electronics and Computer Science, University of Southampton, UK. In 2025, he was a research scholar with the Jacobs School of Engineering, the University of California, San Diego, US. From 2016 to 2017, he was a research fellow of the Japan Society for the Promotion of Science. His research interests include quantum algorithms and wireless communications.
\end{IEEEbiographynophoto}

\end{document}